\theoremstyle{plain}
 \newtheorem{theorem}{Theorem}
 \newtheorem{lemma}{Lemma}
 \newtheorem{proposition}{Proposition}
  \newtheorem{corollary}{Corollary}
   \newtheorem{definition}{Definition}
 \newtheorem{remark}{Remark}
 \newtheorem{assumption}{Assumption}
 \newtheorem{example}{Example}
 \newtheorem{step}{Step}
  \newtheorem{Steps}{Step}
 \newtheorem{problem}{Problem}
\newcommand{\vect}[1]{\boldsymbol{#1}}
\renewcommand{\emph}{\textit}
\newenvironment{proof}[1][Proof]{\begin{trivlist} \item[\hskip \labelsep {\bfseries #1}]}{\end{trivlist}}
\begin{document}
\begin{frontmatter}
\title{On minimal realisations of dynamical structure functions}\thanks{This paper has not been published in any conference, some preliminary results in Section~\ref{sec:special} have been published in \cite{yetac}. Ye Yuan and Jorge Gon\c{c}alves acknowledge the support from EPSRC through EP/I03210X/1, EP/G066477/1.
}
\author[1]{Ye Yuan}, \author[1]{Keith Glover} and \author[1,2]{Jorge Gon\c{c}alves}
\maketitle
\address[1]{Control Group, Department of Engineering, University of Cambridge, UK.} 
\address[2]{Luxembourg Centre for Systems Biomedicine, Luxembourg.}      

\begin{abstract}
Motivated by the fact that transfer functions do not contain structural information about networks, dynamical structure functions were introduced to capture causal relationships between measured nodes in networks. From the dynamical structure functions, a) we show that the actual number of hidden states can be larger than the number of hidden states estimated from the corresponding transfer function; b) we can obtain partial information about the true state-space equation, which cannot in general be obtained from the transfer function. Based on these properties, this paper proposes algorithms to find minimal realisations for a given dynamical structure function. This helps to estimate the minimal number of hidden states, to better understand the complexity of the network, and to identify potential targets for new measurements.
\end{abstract}
\begin{keyword}
 Network reconstruction, Linear system theory.
 \end{keyword}
\end{frontmatter}

\section{INTRODUCTION}
Networks have received increasing attention in the last
decade. In our ``information-rich'' world, questions pertaining to network
reconstruction and network analysis have become crucial for the
understanding of complex systems. In particular, the analysis of molecular
networks has gained significant interest due to the recent explosion
of publicly available high-throughput biological data. Another example are social networks, which are social structures made up of individuals, the nodes, tied by one or more specific types of interdependencies, the edges (e.g. friendship).  
In this context, identifying and analysing network structures from measured data become key questions. 

To mathematically represent networks, we use the standard graph-theoretical notation $\mathcal{G}=\left(\mathcal{V},\mathcal{E},A\right)$, where $\mathcal{V}=\left\{\nu_{1},\ldots,\nu_{n}\right\}$ is the set of nodes, $\mathcal{E}\subset\mathcal{V}\times\mathcal{V}$ is the set of edges, and $A=\left\{A[i,j]\right\}_{i,j=1,\ldots,n}$ is the corresponding $n$ by $n$ weighted adjacency matrix, with $A[i,j]\neq0$ when there is a link from $j$ to $i$, and $A[i,j]=0$ when there is no link from $j$ to $i$. 
In the classic state-space form, we usually write
\begin{align}
\dot{x}(t)&={A}{x}(t)+{B}{u}(t) \nonumber\\
y(t)&=Cx(t)+Du(t) \label{eq:ss}
\end{align}
$x(t)\in\mathbb{R}^n$ is the state vector containing the state (normally physical quantity) of the system. ${A}\in\mathbb{R}^{n\times n}$ is the weighted adjacency matrix reflecting the direct causal relations between the state variables, ${B}\in\mathbb{R}^{n\times m}$, and ${u}(t)\in\mathbb{R}^m$ is a vector of $m$ inputs.  

This work assumes that $p<n$ states are measured. Without loss of generality, the output equation can be written as ${y}(t) = {C}{x}(t)$, where ${C}=[I_p \ \ 0]$, ${I_p}$ is the $p\times p$ identity matrix, and ${0}$ is the $p \times (n-p)$ matrix of zeros.  Hence, the first $p$ elements of the state vector $x$ are exactly the measured variables in the system, and the remaining $(n-p)$ state variables are unmeasured ``hidden" states. The zero structure of the ${A}$ and ${B}$ matrices exactly describe the structure of the network, and the values of these matrices encode the dynamics of the system.

Finding the matrices $A$ and $B$, however, can be a difficult problem in the presence of hidden states ($p<n$). Even with just one hidden state, the realisation problem becomes ill-posed; a transfer function will have many state space realisations, which may suggest entirely different network structures for the system.  This is true even if it is known that the true system is, in fact, a minimal realisation of the identified transfer function.  As a result, failure to explicitly acknowledge the presence and the ambiguity in network structure caused hidden states can lead to a deceptive and erroneous process for network discovery.

Motived by this, we developed a new theory for network inference that reflected the effects of hidden states in a network~\cite{08net_rec}. It introduced a new representation for LTI systems called dynamical structure functions (DSF). DSF capture information at an intermediate level between transfer function and state space
representation (see Figure~\ref{Fig:math}). Specifically, dynamical structure functions not only encode structural information at the measurement level, but also contain some
information about hidden states. In \cite{08net_rec}, we proposed some guidelines for the design of an experimental data-acquisition protocol which allows the collection of data containing sufficient information for the network structure reconstruction problem
to become solvable. Using dynamical structure functions as a means to solve the network reconstruction problem, the following aspects need to be considered:
\begin{figure}
\centering
\includegraphics[scale=0.3]{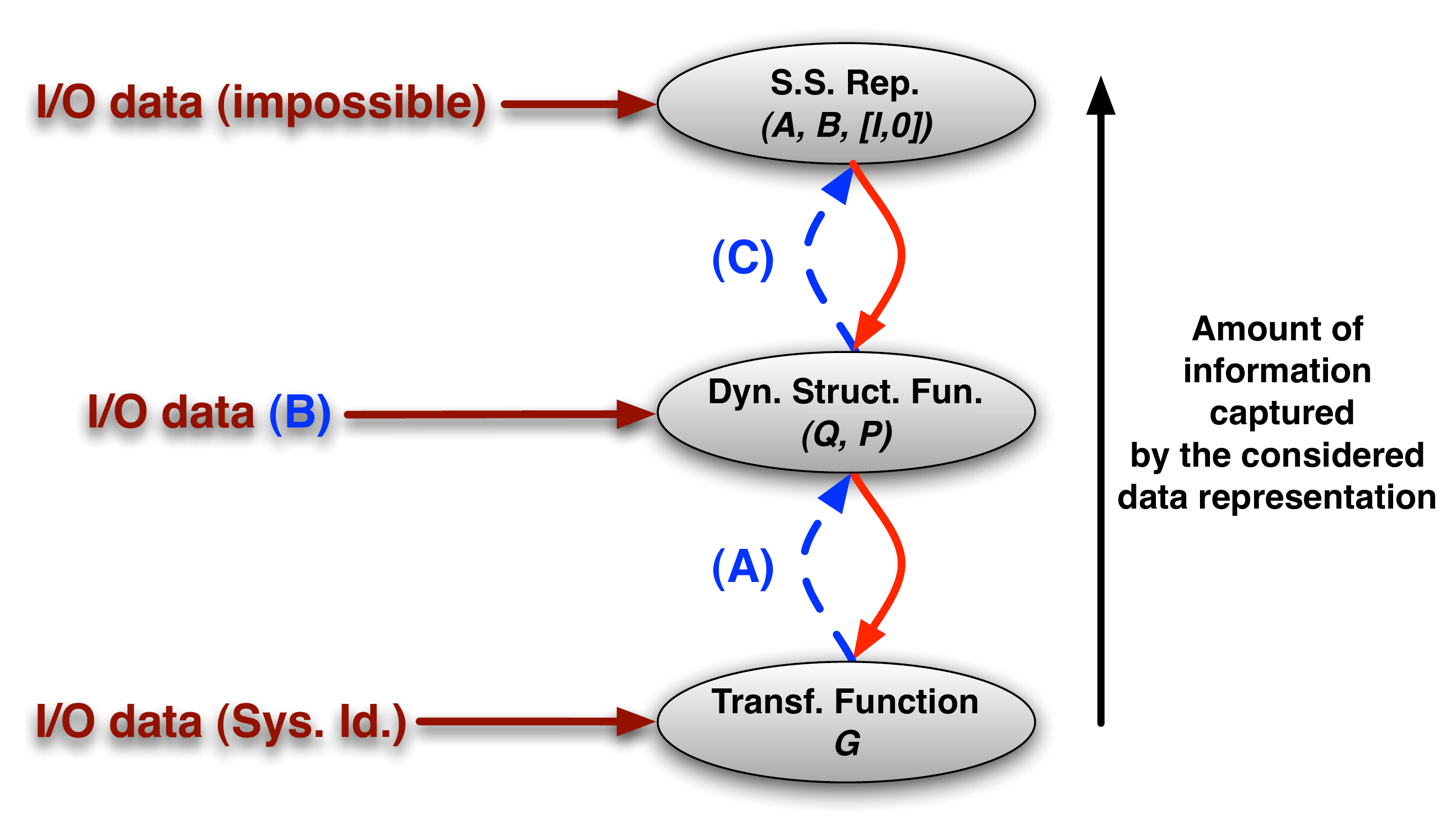}
\caption{Mathematical structure of the network reconstruction problem using dynamical structure functions. Red arrows mean ``uniquely determine'', blue arrows indicate our work.}\label{Fig:math}
\end{figure}

First (see (A) in Figure~\ref{Fig:math}), the properties of a dynamical
structure function and its relationship with the transfer
function associated with the same system were precisely established~\cite{08net_rec}.
%We will show that if
%experiments are performed as explained above: 1) we can not only
%obtain the network between the measured states but also the
%``self-loop'' gain for each measured state; 2) we can still recover
%the true network structure even if the exact value of control
%input is unknown. 

Second (see (B) in Figure~\ref{Fig:math}), an efficient method to reconstruct networks in the presence of noise and nonlinearities was developed~\cite{robust}.
%The method presented in~\cite{robust} relies on the assumption that the conditions for network reconstruction presented above in (A.1) and (A.2) have been met.
In this method, steady-state (resp. time-series data) can be used to reconstruct the Boolean (resp. dynamical network) structure of the system (see~\cite{robust} for more details). 

Third (see (C) in Figure~\ref{Fig:math}), once the dynamical structure function is obtained, an algorithm for constructing a minimal order state-space realisation of such function needs to be developed. This third point is the main contribution of this paper. In an application, this provides an estimate of the complexity of the system by determining the minimal number of hidden states in the system. For example, in the context of biology it helps understand the number of unmeasured molecules in a particular pathway: a low number of hidden states means that most molecules in that pathway have been identified and measured, showing a good understanding of the system; while a large number shows that there are still many unmeasured variables, suggesting that new experiments should be carried out to better characterise that pathway.

For a given dynamical structure function,the major contributions of this paper are: 
\begin{itemize} 
\item[a)]  it explicitly characterises the {\em direct} causal information between measured states and between measured states and inputs;
\item[b)] it introduces a number of new concepts such as hidden observability and controllability; 
\item[c)] it extends the results in \cite{yetac} by considering the minimal realisation problem of more general classes of dynamical structure functions. 
\end{itemize} 

The notation in this paper is standard. For a matrix $A \in \mathbb{R}^{M \times N}$, $A[i,j] \in
\mathbb{R}$ denotes the element in the $i^{th}$ row and $j^{th}$
column, $A[i,:] \in \mathbb{R}^{1 \times N}$ denotes its $i^{th}$
row, $A[:,j] \in \mathbb{R}^{M \times 1}$ denotes its $j^{th}$
column, and $A[i_1:i_2,j_1:j_2]$ 
% \in \mathbb{R}^{(i_2-i_1+1)\times(j_2-j_1+1)}$ 
denotes the submatrix of $A$ defined by the rows $i_1$ to $i_2$ and the columns $j_1$ to $j_2$.
For a column vector $\alpha \in \mathbb{R}^{N\times 1}$, $\alpha[i]$
denotes its $i^{th}$ element. 
%Similarly, for a row vector $\beta \in
%\mathbb{R}^{1\times N}$, $\beta[i]$ denotes its $i^{th}$ element. 
We denote $e_r^T=[0,\ldots,0,1_{r^{th}},0,\ldots,0] \in
\mathbb{R}^{1 \times N}$. Furthermore, 
$I_N$ denotes the identity matrix of size $N$.

\section{DYNAMICAL STRUCTURE FUNCTIONS AND ITS PROPERTIES}\label{sec:systemmodel}
Consider the following linear system (we put a superscript $o$ indicating the original system)
\begin{equation}\label{eq:LTI}
 \begin{array}{cll}
   \left[\begin{array}{c}{\dot{y}}\\{\dot{z}} \end{array}\right]& = &
   \left[\begin{array}{cc}{A}^o_{11}&{A}^o_{12}\\{A}^o_{21}&{A}^o_{22}\end{array}\right]\left[\begin{array}{c}{y}\\{z}\end{array}\right]
   +\left[\begin{array}{c} {B}^o_{1}\\{B}^o_{2}\end{array}\right]{u} \\
   {y} & = &
   \left[\begin{array}{cc}{I}_p&{0}\end{array}\right]\left[\begin{array}{c}{y}\\{z}\end{array}\right],
   \end{array}
\end{equation}
where ${x}=({y},{z}) \in \Bbb R^{n^o}$ is the full
state vector, ${y}\in \Bbb R^p$ is a partial measurement of the
state, ${z}$ are the $n^o-p$ ``hidden'' states, and
${u}\in \Bbb R^m$ is the control input. In this work we restrict our attention to situations where output measurements constitute partial state information, i.e., $p< n^o$. It is well known that the transfer function of this system can be defined by $G^o\triangleq[I_p~0](sI-A^o)^{-1}B^o$. 

\subsection{Definitions of transfer functions and dynamical structure functions}
Dynamical structure functions can be uniquely determined by state-space realisations. It is more involved comparing with the definition of a transfer function \cite{08net_rec}.

Taking the Laplace transforms of the signals in~(\ref{eq:LTI}) yields
\begin{equation}\label{eq:LTIlaplace}
 \begin{array}{lll}
   \left[\begin{array}{c}s{Y}\\s{Z} \end{array}\right]& = & \left[\begin{array}{cc}{A}^o_{11}&{A}^o_{12}\\{A}^o_{21}&{A}^o_{22}\end{array}\right]\left[\begin{array}{c}{Y}\\{Z}\end{array}\right] +\left[\begin{array}{c} {B}^o_{1}\\{B}^o_{2}\end{array}\right]{U}
     \end{array}
\end{equation}
where ${Y}$, ${Z}$, and ${U}$ are the Laplace
transforms of ${y}$, ${z}$, and ${u}$, respectively.
Solving for ${Z}$ gives
$${Z}=\left ( s{I} - {A}^o_{22} \right )^{-1}
{A}^o_{21} {Y} + \left ( s{I} - {A}^o_{22}
\right )^{-1} {B}^o_{2} {U}$$
Substituting this last expression of ${Z}$ into~(\ref{eq:LTIlaplace}) then yields
\begin{equation}
\label{eq:WV} s{Y} = {W}^o {Y} + {V}^o{U}
\end{equation}
where ${W}^o={A}^o_{11} + {A}^o_{12}\left ( s{I} -
  {A}^o_{22} \right )^{-1} {A}^o_{21}$ and
${V}^o={B}^o_{1} +{A}^o_{12}\left ( s{I} - {A}^o_{22} \right
)^{-1} {B}^o_{2}$.

Now, let ${R}^o$ be a
diagonal matrix formed of the diagonal terms of ${W}^o$ on its
diagonal, i.e., ${R}^o=\mbox{diag}\{{W}^o\} =
\mbox{diag}(W^o_{11}, W^o_{22}, ..., W^o_{pp})$. Subtracting ${R}^o{Y}$ from both sides of \eqref{eq:WV}, we obtain:
$$\left ( s{I} - {R}^o \right ) {Y} = \left (
  {W}^o-{R}^o \right ) {Y} + {V}^o {U}$$
Note that ${W}^o-{R}^o$ is a matrix with zeros on its
diagonal. We thus have:
\begin{equation}
  \label{eq:PQ} {Y} =  {QY} + {PU}
\end{equation}
where
\begin{equation}\label{eq:Q}
  {Q} = \left ( s{I} - {R}^o \right )^{-1} \left (
    {W}^o-{R}^o \right )
\end{equation}
and
\begin{equation}\label{eq:P}
  {P}=\left ( s{I}- {R}^o \right )^{-1} {V}^o
\end{equation}
Note that ${Q}$ has zero on the diagonal. Given the system in~(\ref{eq:LTI}), $[{Q},{P}]$ denotes the {\it dynamical structure functions} of the system.

\subsection{Final value properties}
Next we shall explore some properties of $[Q,P]$. One of the most important properties is that the dynamical structure functions capture the {\em direct} causal relations between measured states $y$.
\begin{proposition}\label{th:qp}
Given a dynamical system~(\ref{eq:LTI}) and its associated dynamical structure
  functions $[{Q},{P}]$ with ${R}^o$ constructed as
  explained above (see \eqref{eq:LTI}-\eqref{eq:P}), the following conditions must hold
\begin{align}
  \text{diag}\{{A}^o_{11}\}& = \lim_{s\rightarrow\infty}{R}^o(s)\label{eq:Ds};\\
  {A}^o_{11}-\text{diag}\{{A}^o_{11}\}&=\lim_{s\rightarrow \infty} s {Q}(s)\label{eq:Qs};\\
  {B}^o_1&=\lim_{s\rightarrow \infty} s {P}(s)\label{eq:Ps}.
\end{align}
\end{proposition}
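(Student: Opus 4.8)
The plan is to reduce all three identities to the asymptotic behaviour of the resolvent $(sI-A^o_{22})^{-1}$ as $s\to\infty$. The key elementary fact is that for any fixed matrix $M$ and $|s|>\|M\|$ one has the Neumann expansion $s(sI-M)^{-1}=(I-M/s)^{-1}=I+M/s+O(1/s^2)$, so that $(sI-M)^{-1}\to 0$ and $s(sI-M)^{-1}\to I$. Applying this with $M=A^o_{22}$ immediately gives $(sI-A^o_{22})^{-1}\to 0$ at rate $O(1/s)$, which is the engine behind the whole argument.

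First I would take limits in the definitions of $W^o$ and $V^o$ introduced after \eqref{eq:WV}. Since the correction terms $A^o_{12}(sI-A^o_{22})^{-1}A^o_{21}$ and $A^o_{12}(sI-A^o_{22})^{-1}B^o_2$ both vanish as $s\to\infty$, one obtains $\lim_{s\to\infty}W^o(s)=A^o_{11}$ and $\lim_{s\to\infty}V^o(s)=B^o_1$. Because $R^o=\mathrm{diag}\{W^o\}$ depends continuously (entrywise) on $W^o$, passing to the limit yields $\lim_{s\to\infty}R^o(s)=\mathrm{diag}\{A^o_{11}\}$, which is exactly \eqref{eq:Ds}.

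Next, for \eqref{eq:Qs} and \eqref{eq:Ps} I would analyse the common prefactor $s(sI-R^o)^{-1}$. Since $R^o(s)$ converges to the constant matrix $\mathrm{diag}\{A^o_{11}\}$ it is bounded for all large $s$, hence $R^o(s)/s\to 0$ and $s(sI-R^o(s))^{-1}=(I-R^o(s)/s)^{-1}\to I$. Combining this with the limits of the previous step and using the defining relations \eqref{eq:Q} and \eqref{eq:P} gives
\begin{align*}
\lim_{s\to\infty}sQ(s)&=I\cdot\big(A^o_{11}-\mathrm{diag}\{A^o_{11}\}\big),\\
\lim_{s\to\infty}sP(s)&=I\cdot B^o_1,
\end{align*}
which are precisely \eqref{eq:Qs} and \eqref{eq:Ps}.

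The only genuine obstacle is the $s$-dependence of $R^o$ just noted: before invoking $s(sI-R^o)^{-1}\to I$ one must confirm that $R^o(s)$ stays bounded, equivalently that $sI-R^o$ remains invertible for all sufficiently large $s$, so the elementary Neumann estimate applies to the perturbed (non-constant) matrix. Everything else is a direct application of the resolvent limit, so I expect the proof to be short once this boundedness is recorded.
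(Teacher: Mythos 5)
Your proof is correct and takes essentially the same route as the paper's: both hinge on Neumann-series/resolvent asymptotics, applied to $A^o_{22}$ to obtain $W^o\to A^o_{11}$, $V^o\to B^o_1$ and $R^o\to\mathrm{diag}\{A^o_{11}\}$, and then to $R^o$ itself to handle the factor $(sI-R^o)^{-1}$. The only difference is bookkeeping: the paper expands $Q(s)$ in powers of $s^{-1}$ with an explicit $O(s^{-2})$ remainder, whereas you phrase the same estimate as a product of limits, $sQ=\bigl[s(sI-R^o)^{-1}\bigr](W^o-R^o)\to I\cdot\bigl(A^o_{11}-\mathrm{diag}\{A^o_{11}\}\bigr)$, with the boundedness of $R^o(s)$ that you flag being exactly what justifies the prefactor limit.
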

\begin{proof}
See Appendix~\ref{sec:appendixA}.
\end{proof}

\begin{remark}
This Proposition reveals an important property of dynamical structure functions: they encode the direct causal relations between observed variables, i.e., $A^o_{11}[i,j]~,\forall i\neq j$. These relations cannot be revealed by transfer functions. 
\end{remark}

\begin{example} \label{ex1}
Consider a network with the structure depicted in Fig. \ref{fig:ex1}. The linear state-space representation of this network is given by
\begin{equation*} \begin{array}{rcl}
\dot{x} &=& 
\begin{bmatrix}a_{11} & 0 & a_{13} & 0 & 0 \\
			0 & a_{22} & 0 & a_{24} & 0 \\
			0 & a_{32} & a_{33} & 0 & a_{35} \\
			a_{41} & 0 & 0 & a_{44} & 0 \\
			0 & a_{52} & 0 & 0 & a_{55} \end{bmatrix} x + 
\begin{bmatrix}b_{11} & 0 \\ 0 & b_{22} \\ 0 & 0 \\ 0 & 0 \\ 0 & 0 \end{bmatrix} u \\
y &=& \begin{bmatrix} I_3 & 0 \end{bmatrix} x.
\end{array} \end{equation*}
Following the definitions in \eqref{eq:Q} and \eqref{eq:P}, the corresponding dynamical structure functions $[{Q},{P}]$ are
\begin{align*}
Q &= \begin{pmatrix} 0 & 0 & \frac{a_{13}}{s-a_{11}} \\ \frac{a_{24}a_{41}}{(s-a_{22})(s-a_{44})} & 0 & 0 \\ 0 & \frac{a_{35}a_{52} + a_{32}(s-a_{55})}{(s-a_{33})(s-a_{55})} & 0 \end{pmatrix}, \\
P &= \begin{pmatrix} \frac{b_{11}}{s-a_{11}} & 0 \\ 0 & \frac{b_{22}}{s-a_{22}} \\ 0 & 0 \end{pmatrix}.
\end{align*}
\begin{figure}[!] \centering
\includegraphics[width=1\linewidth]{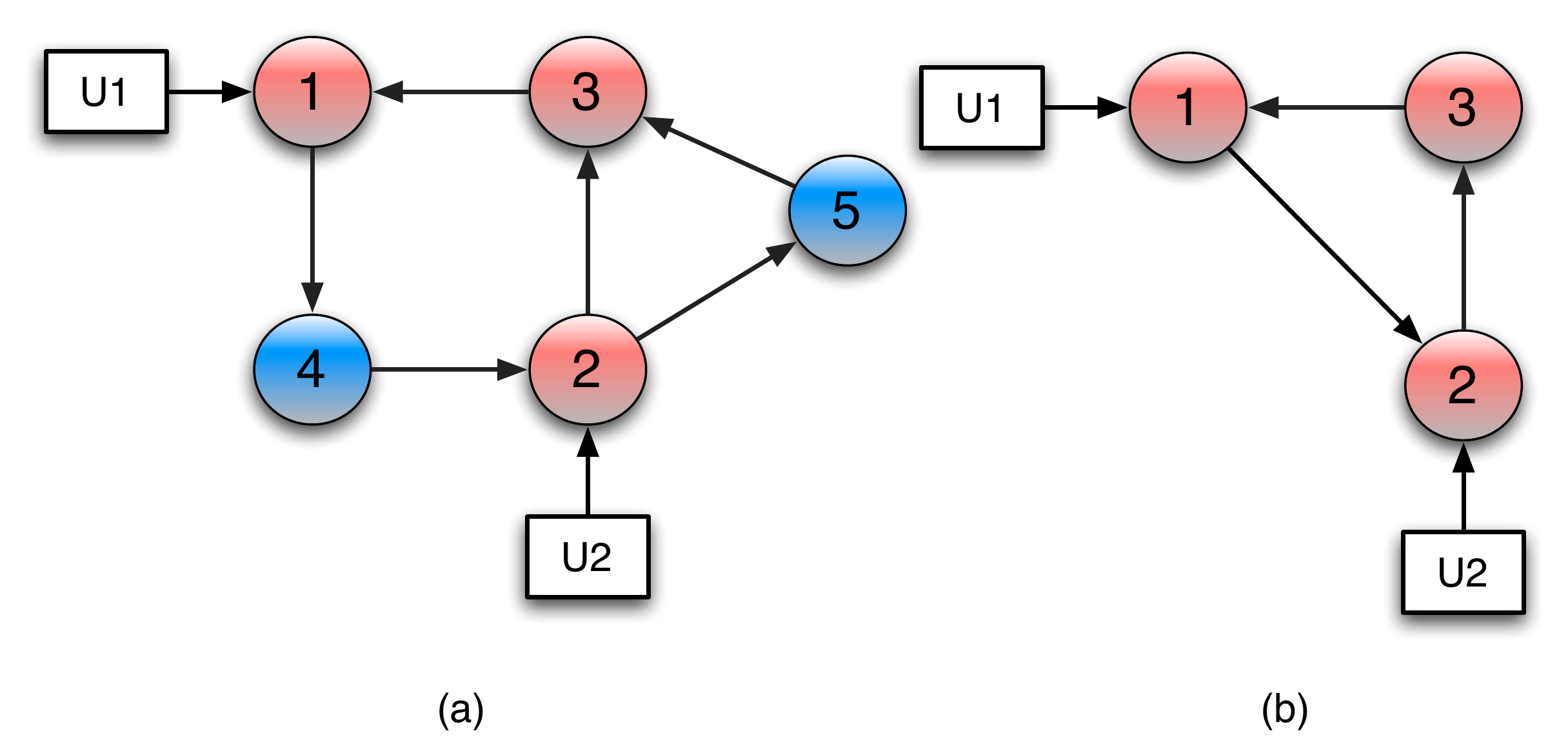}
\caption{\textbf{(a)} An example system with two inputs, three measured states (red states $1$, $2$, and $3$) and two hidden states (blue states $4$ and $5$). \textbf{(b)} The corresponding dynamical structure functions.}
\label{fig:ex1}
\end{figure}
From Proposition~\ref{th:qp}, we can check that:
\begin{align*}
  \lim_{s\rightarrow \infty} s {Q}(s)&=\begin{pmatrix} 0 & 0 & a_{13}  \\
	0 & 0 & 0  \\
	0 & a_{32} & 0 \end{pmatrix} ;\\
\lim_{s\rightarrow \infty} s {P}(s)&=s\begin{pmatrix} \frac{b_{11}}{s-a_{11}} & 0 \\ 0 & \frac{b_{22}}{s-a_{22}} \\ 0 & 0 \end{pmatrix}=\begin{pmatrix}b_{11} & 0 \\ 0 & b_{22} \\ 0 & 0  \end{pmatrix}.
\end{align*}
\end{example}

%Generally, there exist many realisations consistent with $[{Q},{P}]$. In sections~\ref{se:pf} and~\ref{sec:algqp}, we focus on finding a $[{Q},{P}]$ minimal realisation $\left(A,B,\begin{bmatrix} I & 0 \end{bmatrix}\right)$, i.e. a realisation consistent with $[{Q},{P}]$ of minimal order (that is, a realisation with minimal dimension for $A$), and hence with the lowest possible complexity.

\subsection{Realisation problem}

In general, ${Q(s)}$ and ${P(s)}$ carry more information than ${G^o(s)}$. This can be seen from the equality ${G^o}(s) = ({I}-{Q}(s))^{-1}{P}(s)$.  However, ${Q(s)}$ and ${P(s)}$ carry less information than the state-space model \eqref{eq:LTI} (see~\cite{08net_rec,robust} and Figure~\ref{fig:problems}). This leads to the problem of realisation of $[{Q},{P}]$, similar to the problem of realisation of $G^o$. Basically, just like the fact that there are infinite state-space realisations that give the same transfer function (realisation problem (1) and set red in Figure~\ref{fig:problems}), there are an infinite state-space realisations that give the same $[{Q},{P}]$ (realisation problem (2) and set magenta in Figure~\ref{fig:problems}).

\begin{figure}[h]
\centering
  \includegraphics*[width=.48\textwidth]{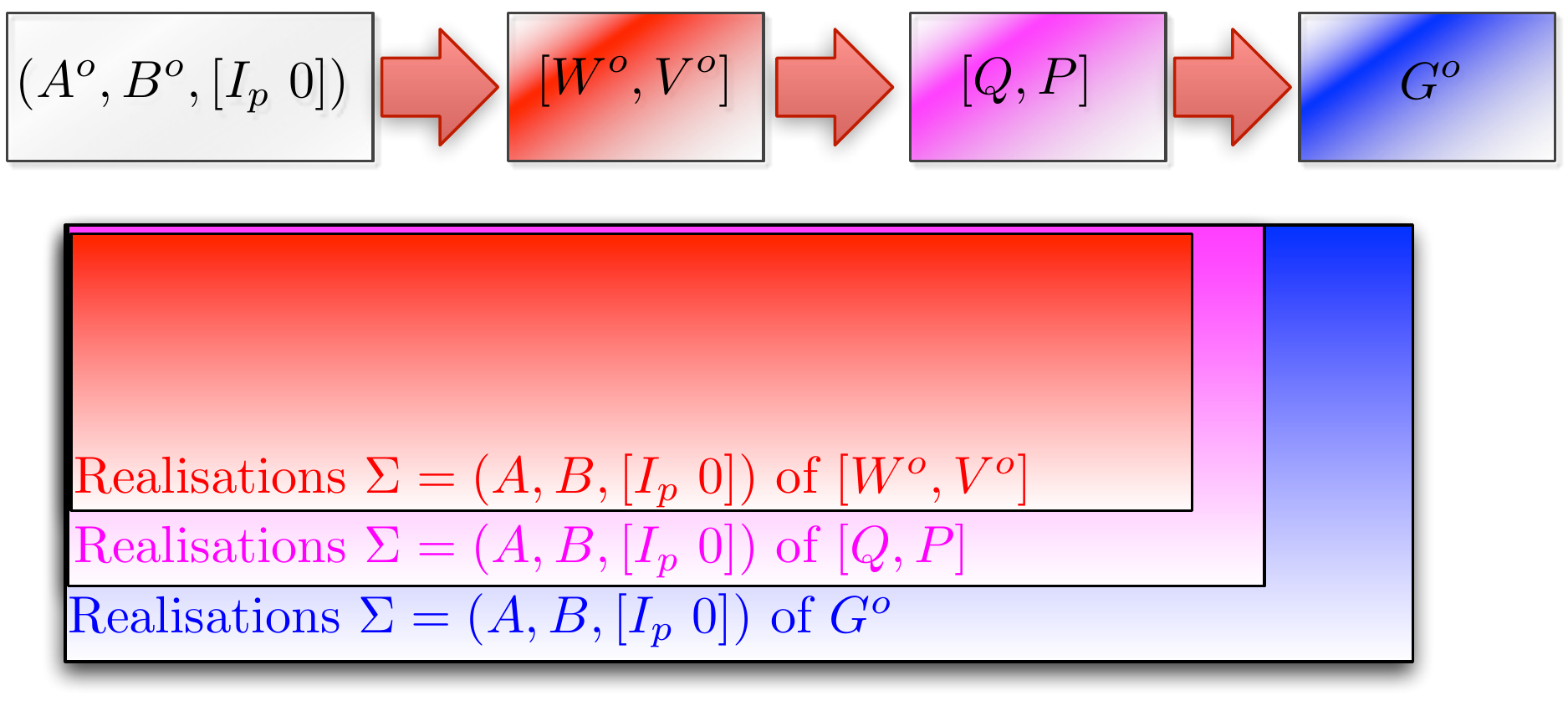}
\caption{Relations among transfer functions, dynamical structure functions and state-space realisations.} \label{fig:problems}     
\end{figure}

%\yy{a) maybe delete def, let $\Sigma$ be a realisation of $[Q,P]$, b) remove all the consistent in the paper}
%
\begin{definition}A system $\Sigma\triangleq(A,~B,~C=[I_p,0])$ is a realisation of $[{Q},{P}]$ if that $\Sigma$ gives $[{Q},{P}]$ from eq.~\eqref{eq:Q} and eq.~\eqref{eq:P}.
\end{definition}

%\begin{remark}
%not the only realisation}

%\begin{definition}
%  Consider a system characterized by a transfer function ${G}$.
%  The dynamical structure of the system can be \textit{reconstructed},
%  if there is only one admissible dynamical structure function,
%  $[{Q},{P}]$, that is consistent with ${G}$.  A
%  realisation of the dynamical structure function is defined as
%  \textit{reconstruction}. Likewise, the Boolean structure of the
%  system can be reconstructed if all admissible dynamical structure
%  functions that are consistent with ${G}$ have the same Boolean
%  structure.
%\end{definition}
%
%Given only a transfer function ${G}$, \cite{08net_rec} shows that
%dynamical structure reconstruction is not possible.  More information
%is required, i.e., dynamical structure reconstruction is possible from
%${G}$ if and only if in addition $p-1$ elements in each column of
%$[{Q}\ \ {P}]^T$ are known that uniquely specify the
%component of $[{Q},{P}]$ in the nullspace of $[{G}^T \
%\ {I}]$ (see \cite{08net_rec} for more details).

\begin{definition}
We say that a realisation $\Sigma\triangleq(A,~B,~C=[I_p,0])$ of $G$ is ${G}$-minimal if this realisation corresponds to a minimal realisation of ${G}$. We say that a realisation $\Sigma$ of $[Q,P]$ is $[{Q},{P}]$-minimal if this realisation of $[{Q},{P}]$ has the smallest order. 
%The size of $A$ is called the order of a minimal structural realisation of $[Q,P]$.
\end{definition}
%\yy{change [Q~P] to [Q,P]}

\subsection{Observability and controllability properties}
Let a system $\Sigma$ have the following form
\begin{equation}\label{eq:sigma}
\Sigma=\left(A=\begin{bmatrix}{A}_{11}&{A}_{12}\\{A}_{21}&{A}_{22}\end{bmatrix},B=\begin{bmatrix}   B_1 \\B_2
\end{bmatrix},\begin{bmatrix}
I_p & 0 \end{bmatrix}\right)
\end{equation}
be a realisation of $[Q,P]$. In this subsection, we shall introduce properties of minimal realisations of $[Q,P]$ and all the proofs in this subsection can be found in Appendix~\ref{sec:appendixA}.
.

%\yy{in the following, we can define $\Sigma$ with partition here and put only $\Sigma$ below}

\begin{proposition}\label{lemma:tra}
Let $\Sigma_1$ be a realisation of $[Q,P]$ (eq.~\eqref{eq:sigma}) and consider a linear transformation mapping $\Sigma_1=(A,B,\begin{bmatrix} I & 0 \end{bmatrix})$ to $\Sigma_2=(T^{-1}AT,T^{-1}B,\begin{bmatrix} I & 0 \end{bmatrix}T)$,  $\Sigma_2$ is also a realisation of $[Q,P]$ for any $T$ with the following form \begin{equation}\label{eq:t}
T=\begin{bmatrix}  I & 0 \\
0 & T_2
\end{bmatrix},
\end{equation}
for any invertible matrix $T_2$.
\end{proposition}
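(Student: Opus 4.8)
The plan is to prove the claim by direct computation: I would write out the transformed realisation $\Sigma_2$ in block form, feed it through the construction \eqref{eq:Q}--\eqref{eq:P}, and show that every intermediate quantity appearing there ($W$, $V$, and hence $R$) is left invariant by $T$, so that the resulting pair coincides with the $[Q,P]$ of $\Sigma_1$. The strategy is therefore to establish invariance at the level of $W$ and $V$ and let the definitions propagate the rest.

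First I would partition $T$ conformally with the $(y,z)$ split and record the three ingredients of $\Sigma_2$. Since $T^{-1}=\begin{bmatrix} I & 0 \\ 0 & T_2^{-1}\end{bmatrix}$, a short block multiplication gives
$$T^{-1}AT=\begin{bmatrix}A_{11}&A_{12}T_2\\ T_2^{-1}A_{21}&T_2^{-1}A_{22}T_2\end{bmatrix},\quad T^{-1}B=\begin{bmatrix}B_1\\ T_2^{-1}B_2\end{bmatrix},$$
while $\begin{bmatrix} I & 0\end{bmatrix}T=\begin{bmatrix} I & 0\end{bmatrix}$. The last identity is the first thing to note: the output matrix is preserved exactly because the upper-left block of $T$ is $I$ and its off-diagonal blocks vanish, so $\Sigma_2$ again has the normal form $(A,B,[I_p\ 0])$ required of a realisation of $[Q,P]$, with the measured/hidden partition intact. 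Observe also that the measured-state data $A_{11}$ and $B_1$ are untouched.

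Next I would compute the $W$ and $V$ of $\Sigma_2$. The key algebraic fact is the resolvent conjugation identity $(sI - T_2^{-1}A_{22}T_2)^{-1}=T_2^{-1}(sI-A_{22})^{-1}T_2$, which holds because $sI - T_2^{-1}A_{22}T_2 = T_2^{-1}(sI-A_{22})T_2$. Substituting the transformed blocks into the formula for $W$, the inner $T_2$ factors telescope,
$$(A_{12}T_2)\,T_2^{-1}(sI-A_{22})^{-1}T_2\,(T_2^{-1}A_{21})=A_{12}(sI-A_{22})^{-1}A_{21},$$
so that the $W$ of $\Sigma_2$ equals $A_{11}+A_{12}(sI-A_{22})^{-1}A_{21}=W$. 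The identical cancellation applied to $V=B_1+A_{12}(sI-A_{22})^{-1}B_2$ shows the $V$ of $\Sigma_2$ equals $V$. Because $R=\text{diag}\{W\}$ is determined by $W$ alone, invariance of $W$ immediately yields the same $R$, and then \eqref{eq:Q}--\eqref{eq:P} force $\Sigma_2$ to reproduce $Q=(sI-R)^{-1}(W-R)$ and $P=(sI-R)^{-1}V$ exactly.

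There is no substantial obstacle here; the entire content lives in the cancellation of the $T_2$ factors around the resolvent, which is precisely why $T$ may act nontrivially on the hidden block $A_{22}$ without perturbing the dynamical structure function. The only point needing care is the bookkeeping that confirms the special form \eqref{eq:t}---identity on the measured states and no coupling into them---is exactly what leaves $A_{11}$, $B_1$ and the output matrix fixed, so that the diagonal extraction $R=\text{diag}\{W\}$ and the normal-form structure are both preserved. A general invertible $T$ would mix measured and hidden coordinates and break this invariance, which is why the block-triangular restriction in \eqref{eq:t} is essential.
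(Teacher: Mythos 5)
Your proof is correct and follows essentially the same route as the paper: partition $T$ conformally, compute $T^{-1}AT$ and $T^{-1}B$ in block form, and verify that $[W,V]$ — and hence $R$, $Q$, $P$ — are unchanged. In fact, the paper omits the final verification ("details of the rest of the proof are omitted"), whereas you supply it explicitly via the resolvent conjugation identity $(sI - T_2^{-1}A_{22}T_2)^{-1} = T_2^{-1}(sI - A_{22})^{-1}T_2$ and the cancellation of the $T_2$ factors, so your write-up is the more complete one.
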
 

\begin{remark}
According to the above proposition, one can apply linear transformations to the hidden states without changing the dynamical structure function. \end{remark}
Similar to minimal realisation of transfer functions, based on Proposition~\ref{lemma:tra} we can define the following hidden observability and controllability concepts. 

\begin{definition}[Hidden Observability]
Given a realisation  $\Sigma$ of $[Q,P]$, we say it is hidden observable if and only if $[A_{22},~A_{12}]$ is observable.
\end{definition}

\begin{definition}[Hidden Controllability]
Given a realisation $\Sigma$ of $[Q,P]$, we say it is hidden controllable if and only if $\left[A_{22},~[A_{21},~B_2]\right]$ is controllable.  
\end{definition}

From these two definitions, we can show that if a realisation $\Sigma$ is $[Q,P]$-minimal then it is both hidden observable and controllable.

\begin{remark}
Linear transformations of the form $T$ in eq.~\eqref{eq:t} do not change the hidden observability and hidden controllability of a system.
\end{remark}

\begin{proposition}\label{th:hidden}
If a realisation $\Sigma$ of $[Q,P]$ is minimal, then it is hidden observable and hidden controllable.
\end{proposition}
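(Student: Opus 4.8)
The plan is to argue by contradiction, reducing the statement to the classical minimal‑realisation theory for a single auxiliary transfer function. The crucial observation is that $[Q,P]$ depends on the hidden states only through the $p\times(p+m)$ rational matrix
$$H(s)\triangleq A_{12}(sI-A_{22})^{-1}[A_{21}~~B_2],$$
together with the constant data $A_{11}$ and $B_1$. Indeed, from the construction in Section~\ref{sec:systemmodel} one has $W^o=A_{11}+A_{12}(sI-A_{22})^{-1}A_{21}$ and $V^o=B_1+A_{12}(sI-A_{22})^{-1}B_2$, so $W^o$ and $V^o$ are obtained from $(A_{11},B_1)$ by appending, respectively, the first $p$ and the last $m$ columns of $H(s)$. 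Since $R^o=\text{diag}\{W^o\}$, and then $Q=(sI-R^o)^{-1}(W^o-R^o)$ and $P=(sI-R^o)^{-1}V^o$, all of $R^o,Q,P$ are determined by the triple $(A_{11},B_1,H)$. Hence any two realisations sharing the same $A_{11}$, the same $B_1$, and the same $H(s)$ produce identical dynamical structure functions.

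The first step I would carry out is to identify hidden observability and hidden controllability with the minimality of the hidden subsystem $\Sigma_h\triangleq(A_{22},\,[A_{21}~~B_2],\,A_{12})$ viewed as a state-space realisation of $H(s)$: by the Kalman criterion, $\Sigma_h$ is a minimal realisation of $H$ if and only if $(A_{22},A_{12})$ is observable and $(A_{22},[A_{21}~~B_2])$ is controllable, which are precisely the two hidden properties in question. Thus the proposition is equivalent to the statement ``if $\Sigma$ is $[Q,P]$-minimal then $\Sigma_h$ is a minimal realisation of $H$.''

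For the contradiction step, suppose $\Sigma$ is $[Q,P]$-minimal but $\Sigma_h$ is not minimal. Classical realisation theory then supplies a realisation $(\tilde A_{22},[\tilde A_{21}~~\tilde B_2],\tilde A_{12})$ of the same $H(s)$ whose state dimension is strictly smaller than $\dim A_{22}$. Keeping $A_{11}$ and $B_1$ fixed and substituting this smaller hidden block, I would form
$$\tilde\Sigma=\left(\begin{bmatrix}A_{11}&\tilde A_{12}\\ \tilde A_{21}&\tilde A_{22}\end{bmatrix},\begin{bmatrix}B_1\\ \tilde B_2\end{bmatrix},[I_p~~0]\right).$$
By the observation in the first paragraph, $\tilde\Sigma$ is again a realisation of $[Q,P]$; but since the number of measured states is fixed at $p$ while the number of hidden states has strictly decreased, $\tilde\Sigma$ has strictly smaller order than $\Sigma$, contradicting $[Q,P]$-minimality. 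Therefore $\Sigma_h$ must be minimal, i.e. $\Sigma$ is hidden observable and hidden controllable.

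The main obstacle is the invariance claim of the first paragraph: verifying rigorously that replacing the hidden block by \emph{any} other realisation of $H$ leaves $Q$ and $P$ unchanged, including the diagonal bookkeeping hidden in $R^o=\text{diag}\{W^o\}$. This is a direct computation from the explicit formulas for $W^o,V^o,R^o,Q,P$, but it is the step that genuinely exploits the special structure of dynamical structure functions (as opposed to transfer functions), and it is what guarantees that the order reduction preserves $[Q,P]$ rather than merely the transfer function $G^o=(I-Q)^{-1}P$.
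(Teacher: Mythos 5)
Your proof is correct and takes the same route the paper indicates: the paper's own proof is just the one-line remark that the claim ``is easy to show by contradiction,'' and your argument is precisely the natural filling-in of that contradiction --- noting that $[Q,P]$ is determined by $(A_{11},B_1)$ together with $H(s)=A_{12}(sI-A_{22})^{-1}[A_{21}~~B_2]$, so a non-minimal hidden block $(A_{22},[A_{21}~~B_2],A_{12})$ could be replaced by a strictly smaller realisation of $H$, contradicting $[Q,P]$-minimality. The only thing you leave as an ``obstacle'' (invariance of $Q$ and $P$ under this replacement) is in fact already settled by your own first paragraph, since $[W,V]=[A_{11},B_1]+H(s)$ and $R=\mathrm{diag}\{W\}$, $Q$, $P$ are functions of $[W,V]$ alone.
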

\begin{proof}
It is easy to show by contradiction. $\hfill \square$
\end{proof}

\begin{remark}
Note that a realisation $\Sigma$ of $[Q,P]$ can be hidden observable and hidden controllable and not necessarily $[Q,P]-$ minimal. 
\end{remark}

\begin{proposition}\label{th:obserable} 
If a system $\Sigma$ is hidden observable, then it is observable.
\end{proposition}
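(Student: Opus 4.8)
The plan is to use the Popov--Belevitch--Hautus (PBH) eigenvector test rather than the full observability matrix, since this lets me exploit the block structure of $A$ together with the special output $C=[I_p~0]$ directly. Recall that a pair $(A,C)$ is observable if and only if there is no nonzero vector $v$ and scalar $\lambda$ with $Av=\lambda v$ and $Cv=0$. I would argue by contradiction that no such $v$ can exist once $[A_{22},A_{12}]$ is known to be observable.

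First I would suppose, for contradiction, that $\Sigma$ is \emph{not} observable. By the PBH test there is then a nonzero eigenvector $v=\begin{bmatrix}v_1\\v_2\end{bmatrix}$, partitioned conformally with the measured and hidden states, together with an eigenvalue $\lambda$, satisfying $Av=\lambda v$ and $Cv=0$. Since $C=[I_p~0]$, the condition $Cv=0$ forces $v_1=0$, and because $v\neq 0$ we must have $v_2\neq 0$.

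Next I would read off the two block rows of the eigenvalue equation $Av=\lambda v$ after substituting $v_1=0$. The top block collapses to $A_{12}v_2=0$, while the bottom block gives $A_{22}v_2=\lambda v_2$. Thus $v_2$ is a nonzero eigenvector of $A_{22}$ with eigenvalue $\lambda$ that simultaneously lies in the kernel of $A_{12}$.

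Finally, I would invoke the PBH test a second time, now applied to the pair $(A_{22},A_{12})$: the existence of a nonzero $v_2$ with $A_{22}v_2=\lambda v_2$ and $A_{12}v_2=0$ is exactly the condition that $[A_{22},A_{12}]$ fails to be observable, contradicting the assumption of hidden observability. Hence no unobservable mode of $\Sigma$ can exist, so $\Sigma$ is observable. I do not expect a genuine technical obstacle here; the only point needing care is the bookkeeping of the block partition, together with the observation that the structural zero in $C$ is precisely what reduces the full PBH condition for $(A,C)$ onto the hidden subsystem $(A_{22},A_{12})$.
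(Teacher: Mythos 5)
Your proof is correct and takes essentially the same approach as the paper: both invoke the PBH test and exploit the structural zero in $C=[I_p~0]$ to reduce observability of $(A,C)$ to observability of the hidden pair $(A_{22},A_{12})$. The paper uses the rank form of PBH directly, while you use the equivalent eigenvector (null-vector) form in a contradiction argument, which is merely a rephrasing of the same idea.
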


Based on the above proposition, we can show the following Corollary.
\begin{corollary}
Given a minimal realisation $\Sigma$ of $[Q,P]$, then the order of this realisation is equal to the order of $G=(I-Q)^{-1}P$ if and only if $\Sigma$ is controllable.
\end{corollary}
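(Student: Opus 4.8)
The plan is to observe that the given $\Sigma$ is itself a state-space realisation of the transfer function $G$, and then to invoke the classical fact that such a realisation attains the McMillan degree of $G$ exactly when it is both observable and controllable. Since observability will already be guaranteed by the minimality of $\Sigma$ with respect to $[Q,P]$, the whole statement will reduce to controllability. Concretely, I would first note that because $\Sigma=(A,B,[I_p\ 0])$ realises $[Q,P]$, its transfer function $[I_p\ 0](sI-A)^{-1}B$ coincides with $(I-Q)^{-1}P=G$; hence $\Sigma$ is a realisation of $G$, say of order $n$. The order of $G$ is by definition the dimension of a minimal realisation of $G$, i.e. its McMillan degree, which is therefore at most $n$.

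Next I would establish that $\Sigma$ is observable, independently of any controllability assumption. Since $\Sigma$ is a minimal realisation of $[Q,P]$, Proposition~\ref{th:hidden} gives that it is hidden observable, and Proposition~\ref{th:obserable} then yields that it is observable. This is the key leverage: $[Q,P]$-minimality already buys observability for free, so the only remaining degree of freedom separating $n$ from the order of $G$ is controllability.

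Finally I would combine these facts with the standard criterion that a realisation of a transfer function is minimal, equivalently has order equal to the McMillan degree, if and only if it is simultaneously controllable and observable. For the ``if'' direction, if $\Sigma$ is controllable then, being also observable, it is a minimal realisation of $G$, so its order $n$ equals the order of $G$. For the ``only if'' direction, if $n$ equals the order of $G$, then $\Sigma$ is a realisation of $G$ whose order already matches the McMillan degree and is therefore minimal, which forces $\Sigma$ to be controllable. This yields the claimed equivalence.

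I do not expect a deep obstacle here; the argument is essentially the careful assembly of three ingredients (the identity $G=(I-Q)^{-1}P$ making $\Sigma$ a realisation of $G$, the automatic observability from $[Q,P]$-minimality, and the minimality-equals-controllable-plus-observable theorem). The one point that genuinely needs care is verifying that the observability of $\Sigma$ is secured by its $[Q,P]$-minimality alone, so that the equivalence truly hinges on controllability and does not secretly depend on an interplay with observability; this is exactly what Propositions~\ref{th:hidden} and~\ref{th:obserable} supply.
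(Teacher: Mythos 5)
Your proposal is correct and follows exactly the route the paper intends: the corollary is stated "based on the above proposition" precisely because $[Q,P]$-minimality gives hidden observability (Proposition~\ref{th:hidden}), hence observability (Proposition~\ref{th:obserable}), after which the classical ``minimal iff controllable and observable'' criterion reduces the equivalence to controllability alone. Your write-up simply makes explicit the assembly of these three ingredients that the paper leaves implicit.
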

%\yy{need to double check everything}

%\begin{remark}
%When a minimal realisation of a given transfer function has lower dimension than that of a given dynamical structure functions that is consistent to the transfer function in question.
%\end{remark}

\section{PROBLEM FORMULATION}\label{se:pf}
From here on, the paper assumes that the dynamical structure functions $[Q,P]$ and the transfer function $G$ are known, and  the original state-space realisation~\eqref{eq:LTI} is unknown. We then proceed to search for a minimal realisation of $[Q,P]$. Just like the minimal realisation of a transfer function, the underlying principle to find a $[{Q},{P}]$-minimal realisation is to search for a realisation with the minimal number of hidden states.  The rest of the paper aims to solve the following problem.

\begin{problem}\label{prob:main}
{\em [}{\bf Minimal $[Q,P]$ realisation}{\em ]} Given a dynamical structure function $[Q,P]$, find a minimal realisation $\Sigma\triangleq(A,~B,~C=[I_p~0])$ of $[Q,P]$. 
\end{problem}

\begin{remark}
From the above definitions, the order of a minimal structural realisation of $[Q,P]$ is always higher or equal to that of a minimal realisation of a transfer function $G=(I-Q)^{-1}P$. 
\end{remark}

%\yy{change}
%After exploring properties of dynamical structure functions, we now come back to Problem~\ref{prob:main}. 
Note that the original transfer matrices $[{W}^o, {V}^o]$ cannot be reconstructed from the  dynamical structure function $[{Q},{P}]$ since there is no information regarding the diagonal proper transfer function matrix ${R}^o$. Hence, choosing an arbitrary diagonal proper transfer function matrix ${R}$ leads to an arbitrary $[W,~V]$ from the following equation 
\begin{equation}\label{eq:Rrealization}
[{W},{V}]= [(s{I} - {R}){Q} +{R},(s{I}-{R}){P}],
\end{equation} 
which is obtained from reversing the steps in equations~(\ref{eq:Q}) and~(\ref{eq:P}). Note that, in general, $[W,~V]$ will be different from $[{W}^o, {V}^o]$.
A realisation of $[W,~V]$ is given by
\begin{equation}\label{eq::WVrealization}
  [{W},{V}]=[{A}_{11},{B}_1]+{A}_{12}(s{I}-{A}_{22})^{-1}[{A}_{21},{B}_2]
\end{equation}
where ${A}$ and ${B}$ are state-space matrices, structured similarly to equation~(\ref{eq:LTI}).
Again, this realisation is, in general, different from~(\ref{eq:LTI}), since it is not possible to recover~(\ref{eq:LTI}) from $[{Q},{P}]$ alone.

%\yy{need to add a proposition which shows this covers all cases.}

\begin{remark}
Any realisation of $[Q,P]$ can be obtained from eq.~\eqref{eq:Rrealization} and eq.~\eqref{eq::WVrealization}.
\end{remark}

The idea for solving Problem~\ref{prob:main} is to use a state-space realisation approach to find an ${R}^*$ that  minimises the order of $[{W},{V}]$. Such realisation is also a $[{Q},{P}]$ minimal realisation. Mathematically, the problem can be reformulated according to finding such ${R}^*$ 
$$R^*=\text{argmin}_{R\in\mathcal{D}_p} \text{deg} [W,V],$$
where deg is the McMillan degree \cite{zdg} and $\mathcal{D}_p$ is the set of all proper diagonal transfer matrices with dimension $p$ (the number of measured states) that admits a diagonal realisation. This is equivalent to finding $R^*$ from the following equation
\begin{equation}\label{eq:D}
R^*=\text{argmin}_{R\in\mathcal{D}_p} ~\text{deg}\left\{(s{I}-{R})s^{-1}[s{Q},s{P}]+[{R},{0}]\right\}.
\end{equation}
This non-convex optimisation is, in general, hard to solve directly. 
Note that a random choice of a proper diagonal transfer function matrix ${R}$ is likely to result in additional zeros in $[sI-W,V]$.
\begin{proposition}\label{prop:control}
If $[sI-W,V]$ has a zero, then for any realisation $(A,B,C=[I_p~0])$ obtained from eq.~\eqref{eq::WVrealization}, $[A,B]$ is not controllable.
\end{proposition}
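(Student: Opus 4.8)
The plan is to prove the statement through the Popov--Belevitch--Hautus (PBH) rank test: the pair $[A,B]$ is controllable if and only if the pencil $[s_0 I - A,\ B]$ has full row rank $n$ for every $s_0\in\mathbb{C}$. It therefore suffices to exhibit a single value $s_0$ at which this pencil drops rank, and I would obtain both $s_0$ and an explicit left annihilator of the pencil from the hypothesised zero of $[sI-W,V]$.

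Concretely, partition the realisation as
$$[s I - A,\; B] = \begin{bmatrix} sI - A_{11} & -A_{12} & B_1 \\ -A_{21} & sI - A_{22} & B_2 \end{bmatrix}.$$
Since $[sI-W,V]$ is $p\times(p+m)$ with generic (normal) rank $p$, a zero at $s=s_0$ means there exists a nonzero row vector $\eta$ with $\eta\,(s_0 I - W(s_0))=0$ and $\eta\,V(s_0)=0$. First I would treat the generic case in which $s_0$ is not an eigenvalue of $A_{22}$, so that $s_0 I - A_{22}$ is invertible. Setting $\eta_1=\eta$ and $\eta_2=\eta\,A_{12}(s_0 I - A_{22})^{-1}$, I would substitute the defining identities $W=A_{11}+A_{12}(sI-A_{22})^{-1}A_{21}$ and $V=B_1+A_{12}(sI-A_{22})^{-1}B_2$ from eq.~\eqref{eq::WVrealization} and check directly that $[\eta_1,\ \eta_2]$ annihilates each of the three block-columns of $[s_0 I - A,\ B]$: the middle block collapses by the definition of $\eta_2$, while the first and third blocks reduce exactly to $\eta(s_0 I - W(s_0))=0$ and $\eta V(s_0)=0$. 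As $\eta_1=\eta\neq 0$, the row vector $[\eta_1,\ \eta_2]$ is nonzero, so $[s_0 I - A,\ B]$ is row-rank deficient and PBH fails. Equivalently, permuting columns so that $sI-A_{22}$ sits in the lower-right corner and taking its Schur complement recovers $[sI-W,V]$, yielding the bookkeeping identity $\operatorname{rank}[s_0 I - A,\ B]=(n-p)+\operatorname{rank}[s_0 I - W(s_0),\ V(s_0)]<n$; I find this the most transparent way to record the argument, and it shows the conclusion holds for every realisation built from eq.~\eqref{eq::WVrealization}, since all of them share the same $W$ and $V$.

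The only delicate point, and the step I expect to be the main obstacle, is the degenerate case in which the zero $s_0$ coincides with an eigenvalue of $A_{22}$, i.e.\ with a pole of $W$ or $V$. There $s_0 I - A_{22}$ is singular, the closed form for $\eta_2$ is undefined, and the Schur-complement identity does not apply verbatim. To close this gap I would use one of three routes: (i) appeal to the Smith--McMillan characterisation, under which a transmission zero is a rank drop of the rational matrix evaluated away from its poles, so the coincidence does not occur for a genuine zero; (ii) a continuity argument, letting $s\to s_0$ in the rank inequality above; or (iii) replacing the rational pencil by the associated polynomial (Rosenbrock) system matrix, in which no pole--zero cancellation is present. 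Any of these settles the degenerate case, after which the PBH conclusion is immediate.
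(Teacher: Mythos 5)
Your proof is correct and follows essentially the same route as the paper's: both construct the left annihilator $[\eta,\ \eta A_{12}(s_0 I - A_{22})^{-1}]$ of the pencil $[s_0 I - A,\ B]$ from the zero direction of $[sI-W,V]$ and conclude via the PBH test. The only difference is that you explicitly handle the degenerate case where $s_0$ is an eigenvalue of $A_{22}$ (a case the paper's proof silently assumes away when it inverts $s_0 I - A_{22}$), which is welcome extra rigour but not a different method.
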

\begin{proof}
See Appendix~\ref{sec:appendixA}. $\hfill \square$ 
\end{proof}
\begin{remark}
Proposition~\ref{prop:control} shows that these additional zeros in $[sI-W,V]$ lead to unnecessary uncontrollable modes in the realisation $(A,B,C=[I_p~0])$ which means that 
$(A,B,C=[I_p~0])$ is not a minimal realisation of $[Q,P]$.
\end{remark}

\begin{remark}
There might be many choices for ${R}^*$ that minimise
the order of minimal realisations of $[{W}~{V}]$.
%a chosen
%${R}^*$ may be different from ${R}^o$. 
\end{remark}

\begin{remark}
After solving the problem in eq.~(\ref{eq:D}) and obtaining a minimal realisation of $[Q,P]$, we can use Proposition~\ref{lemma:tra} to find other minimal realisations of $[Q,P]$. 
\end{remark}
 
%From the test using Keith's code, we found that for random choice of $(A,~B,~C,~D)$, the probability of $[I-Q,P]$ has zero is high! This indicates that due to the special construction of $[I-Q,P]$, the assumption in Proposition~\ref{prop:d} is actually strong. We shall now consider the scenario when $[I-Q,P]$ has zeros.

Next, we shall convert the optimisation in eq.~\eqref{eq:D} into a simpler form that explores the structure of the optimisation. To start,  
let 
\begin{equation}\label{eq:ep}
\mathcal{E}_p\triangleq\{N| N=(I-R/s), \forall R\in\mathcal{D}_p\}
\end{equation}
and note that there is an one-to-one map between $\mathcal{E}_p$ and $\mathcal{D}_p$.

\begin{proposition}\label{prop:convert}
For any $[I-Q,P]$ with full normal row rank, the following equality holds
\begin{align}
&\min_{R\in\mathcal{D}_p} ~\text{deg}\left\{(s{I}-{R})s^{-1}[s{Q}~s{P}]+[{R}~{0}]\right\}\nonumber\\
&=\min_{N\in\mathcal{E}_p} ~\text{deg} \left\{N[I-{Q},~{P}]\right\}-p.
\end{align}
\end{proposition}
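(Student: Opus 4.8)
The plan is to reduce the stated identity to a single McMillan–degree bookkeeping fact about multiplying a transfer matrix by an integrator, and then to settle that fact with an explicit state-space construction.

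First I would simplify both sides. On the left, using $[sQ,sP]=s[Q,P]$ together with the reconstruction formula \eqref{eq:Rrealization}, the bracket collapses to $[W,V]$: indeed $(sI-R)s^{-1}[sQ,sP]+[R,0]=(sI-R)[Q,P]+[R,0]=[(sI-R)Q+R,\,(sI-R)P]=[W,V]$, so the left side equals $\min_{R}\mathrm{deg}[W,V]$. On the right, since $N=(sI-R)/s$ and since $sI-W=(sI-R)(I-Q)$, $V=(sI-R)P$ give $[sI-W,V]=(sI-R)[I-Q,P]$, I obtain $N[I-Q,P]=s^{-1}(sI-R)[I-Q,P]=s^{-1}[sI-W,V]$. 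Through the one-to-one correspondence $R\leftrightarrow N$, the Proposition is thus equivalent, under the two minimisations, to the degree relation $\mathrm{deg}\{s^{-1}[sI-W,V]\}=\mathrm{deg}[W,V]+p$.

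Next I would establish the upper bound $\mathrm{deg}\{s^{-1}[sI-W,V]\}\le \mathrm{deg}[W,V]+p$ by construction. Take a minimal realisation $[W,V]=\bar D+\bar C(sI-\bar A)^{-1}\bar B$ of order $\delta=\mathrm{deg}[W,V]$ (this is \eqref{eq::WVrealization} with $\bar A=A_{22}$, $\bar C=A_{12}$, $\bar B=[A_{21},B_2]$, $\bar D=[A_{11},B_1]$, and $(\bar A,\bar B,\bar C)$ controllable and observable). Writing $s^{-1}[sI-W,V]=E+s^{-1}[W,V]J$ with $E=[I_p,0]$ and $J=\mathrm{diag}(-I_p,I_m)$, I would cascade $[W,V]$ with $p$ output integrators to get the order-$(\delta+p)$ realisation $\tilde A=\left[\begin{smallmatrix}\bar A&0\\ \bar C&0\end{smallmatrix}\right]$, $\tilde B=\left[\begin{smallmatrix}\bar B J\\ \bar D J\end{smallmatrix}\right]$, $\tilde C=[0,\,I_p]$, $\tilde D=E$. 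This yields the bound and, after minimising, the inequality $\min_N\mathrm{deg}\{N[I-Q,P]\}-p\le \min_R\mathrm{deg}[W,V]$.

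The hard part is the matching lower bound, i.e. showing the construction truly costs $\delta+p$ states at the relevant $R$. Observability is free: the PBH test for $(\tilde A,\tilde C)$ reduces, through the block $[0,I_p]$, to observability of $(\bar A,\bar C)$, which holds by minimality. Controllability is the real obstacle: the PBH test for $(\tilde A,\tilde B)$ can only fail at the integrator eigenvalue $\lambda=0$, and there it fails exactly when $\left[\begin{smallmatrix}\bar A&\bar B\\ \bar C&\bar D\end{smallmatrix}\right]$ drops row rank — equivalently, when $[sI-W,V]$ carries a transmission zero at $s=0$. The full-normal-row-rank hypothesis on $[I-Q,P]$ (inherited by $[sI-W,V]=(sI-R)[I-Q,P]$) rules out generic rank deficiency, but a particular $R$ can still plant a zero at the origin and depress the degree, so the identity is genuinely a statement about the minimisers rather than a pointwise one. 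Here I would invoke Proposition~\ref{prop:control}: such an origin zero forces an uncontrollable, hence non-$[Q,P]$-minimal, realisation. Therefore at a minimiser $R^{*}$ of $\mathrm{deg}[W,V]$ the associated realisation is $[Q,P]$-minimal and controllable, $[sI-W^{*},V^{*}]$ has no zero at the origin, the cascade is minimal, and $\mathrm{deg}\{s^{-1}[sI-W^{*},V^{*}]\}=\delta_{\min}+p$; checking in addition that no admissible $N$ can undercut $\delta_{\min}+p$ (again through the zero/uncontrollability dictionary) supplies the reverse inequality and closes the proof. I expect essentially all the difficulty to concentrate in this controllability-at-the-origin step and in the clean identification of origin zeros of $[sI-W,V]$ with non-minimal realisations of $[Q,P]$.
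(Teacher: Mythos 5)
Your reduction of both sides and your integrator-cascade construction are correct, and they give a clean, state-space proof of one direction: for every $R$ (with corresponding $N$) one has $\deg N[I-Q,P]=\deg\{s^{-1}[sI-W,V]\}\le \deg[W,V]+p$, hence $\min_N \deg N[I-Q,P]-p\le \min_R \deg[W,V]$. The paper obtains this same direction differently, by Smith--McMillan counting: the proof of Lemma~\ref{lemma1} gives the pointwise identity $\deg[W,V]=\deg N[I-Q,P]-q+j-1$, and $-q+j-1\ge -p$ since $q\le p$, $j\ge 1$. Up to this point your argument is sound, just by another technique.

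The genuine gap is the reverse inequality $\min_R \deg[W,V]\le \min_N \deg N[I-Q,P]-p$, and your plan for it fails for two reasons. First, it is logically misdirected: you work at a minimiser $R^*$ of the \emph{left}-hand side and aim to show the cascade there is minimal, i.e. $\deg\{s^{-1}[sI-W^*,V^*]\}=\delta_{\min}+p$. But that only yields $\min_N \deg N[I-Q,P]\le \delta_{\min}+p$, which is the direction you already have; what is needed is that \emph{no} $N$ achieves $\deg N[I-Q,P]<\delta_{\min}+p$, a statement about the minimiser of the \emph{right}-hand side, which you defer to ``checking \ldots through the zero/uncontrollability dictionary'' with no actual argument. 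This deferred step is exactly where the paper does its real work: Lemmas 2 and 3 in Appendix~\ref{se:AppB} show that an optimal $N^*$ for eq.~\eqref{eq:Np} makes $[X^*,Y^*]=[I,0]-N^*[I-Q,P]$ have full normal row rank ($q=p$) and all Smith--McMillan denominators divisible by $s$ ($j=1$), so that multiplication by $s$ drops the degree by exactly $p$; both lemmas exploit the structure of the designed $N^*$ (its poles cancel zeros of $[I-Q,P]$, and it carries no superfluous zeros at the origin). Second, the key entry of your ``dictionary'' --- an origin zero of $[sI-W,V]$ forces an uncontrollable, \emph{hence non-$[Q,P]$-minimal}, realisation --- is false as an implication: Proposition~\ref{prop:control} only gives zero $\Rightarrow$ uncontrollable, and uncontrollability does not allow you to delete a hidden state while preserving $[Q,P]$. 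Indeed, by the paper's Corollary 1 a $[Q,P]$-minimal realisation is controllable if and only if the $[Q,P]$-minimal order equals that of $G=(I-Q)^{-1}P$, and the central message of the paper is that the former typically exceeds the latter, so uncontrollable $[Q,P]$-minimal realisations exist. With that implication gone, both your claim that $[sI-W^*,V^*]$ has no origin zero and your unproven ``no-undercutting'' step collapse, so the second half of the proof is missing.
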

\begin{proof}
See Appendix~\ref{se:AppB}. 
\end{proof}

From the above proposition, the next section shall focus on solving
\begin{equation}\label{eq:Np}
N^* \triangleq\text{argmin}_{N\in\mathcal{E}_p} \{\text{deg}N[I-Q,P]\}.
\end{equation}

%
%\begin{assumption}\label{ass:2}
%$[I-Q,P]$ does not have zeros at $0$.
%\end{assumption}
%This assumption will be relaxed, we assume it for now to better illustrate the idea. 
%\begin{remark}
%The objective function in eq.~\eqref{eq:D} can be equivalently solved from eq. ~\eqref{eq:N}. Once $N^*$ has been solved, $R^*$ can be computed correspondingly. %If $[Q,P]$ do have poles at $0$, Proposition~\ref{prop: R} still follows by slightly modifying $$R^*=\textbf{argmin}_{R\in\mathcal{D}_p} ~\text{deg} \left\{(s{I}-{R})(s-a)^{-1}[I-{Q}~{P}]\right\}$$
%%where $a\in\mathbb{R}$ is not a pole of $[I-{Q}~{P}]$.
%\end{remark}

\section{MAIN ALGORITHM FOR OBTAINING A MINIMAL REALISATION OF $[Q,P]$}
\subsection{Analysis}\label{se:4.1}
This section proposes an algorithm to solve the optimisation in eq.~\eqref{eq:Np}. It follows that
\begin{align}
&\text{deg}\left\{N[I-{Q},{P}]\right\}=\text{deg}\left\{N\right\}+\text{deg}\left\{[I-{Q},{P}]\right\}\nonumber\\
&-\text{$\#$ of cancelled zeros of $[I-Q,P]$ by cascading} \nonumber\\
 &- \text{$\#$ of cancelled poles of $[I-Q,P]$ by cascading}. \label{eq:degree}
\end{align}
Next, we shall derive conditions on $N(s)$ for cancelling zeros and poles of $[I-Q,P]$. 
\begin{assumption}\label{ass:1}
Assume that $[I-Q,P]$ only has simple poles and does not have the same poles and zeros. 
\end{assumption}
%\yy{Can we say that this makes the derivations cleaner but is not essential? Basically, does the general case follow as well, but is a lot more messy? If yes, we should say it. If not, we should make a small comment that this is not that restrictive.}

Since $Q,~P$ are strictly proper, a minimal realisation of $[I-{Q},{P}]$ has the following form: ${C}_1({A}_1-sI)^{-1}{B}_1+D_1$. When $[I-Q,P]$ has $l$ simple poles, Gilbert's realisation \cite{gilbert} gives
\begin{equation}\label{eq:gilbertqp}
[I-{Q},{P}]=\sum_{i=1}^l
\frac{{K}_i}{s-\lambda_i}+\lim_{s\rightarrow\infty}[I-{Q},{P}],
\end{equation}
where ${K}_i =
\lim_{s\rightarrow\lambda_i}(s-\lambda_i)[I-{Q},{P}]$ and has
rank $1$, since we are assuming that $[I-{Q},{P}]$ has simple poles. Consider the following matrix decomposition for ${K}_i$:
\begin{equation}\label{eq:E}
{K}_i={E}_i{F}_i,~\forall i,
\end{equation}
where ${E}_i \in \mathbb{C}^{p}$ and
${F}_i=({E}_i^T{E}_i)^{-1}{E}_i^T{K}_i$. Then ${A}_1=\text{diag}\{\lambda_i\}\in\mathbb{C}^{l\times l}$, ${B}_1=\begin{bmatrix} {F}^T_1 & {F}^T_2 & \ldots &
  {F}^T_l \end{bmatrix}^T$, ${C}_1=\begin{bmatrix}
  {E}_1 & {E}_2 & \ldots & {E}_l \end{bmatrix}$ and
${D}_1=\lim_{s\rightarrow\infty}[I-{Q},{P}]=[I,0]$.

Similarly, $N(s)$ is a diagonal transfer matrix with its minimal realisation $(A_2,B_2,C_2,I)$. Without loss of generality, assume the matrix ${A}_2$ is diagonal  (otherwise, a linear transform can diagonalise $A_2$ without changing $N(s)$). A minimal realisation of a diagonal transfer matrix can be obtained from a composition of Gilbert realisations of all transfer functions on the diagonal. Let $N[m,m]=c_m(sI-a_m)^{-1}b_m$ where $(a_m\in\mathbb{C}^{k_m\times k_m}, b_m \in \mathbb{C}^{k_m\times 1})$ is the minimal realisation of the the $m^{th}$ transfer function on the diagonal. Then a minimal realisation of $N$ has the form
\begin{align}
&A_2=\text{diag} [a_1,~\ldots,~a_k], ~B_2=\text{diag} [b_1,~\ldots,~b_k],\nonumber\\
&C_2=\text{diag} [c_1,~\ldots,~c_k],\label{eq:realisationofN}
\end{align}
 (with $\sum_{m=1}^k k_m=r$, where $r$ is the McMillan degree of $N(s)$). In the following results, let $\mathcal{B}(\cdot)$ be the Boolean operator which maps a matrix/vector to a Boolean one.

\begin{theorem}\label{thm:cancelz}
Under Assumption~\ref{ass:1}, if a zero  $\lambda_i$ of $[I-Q,P]$ (with direction $v_i^T$) is cancelled by cascading a system ${N}(s)={C}_2({A}_2-sI)^{-1}{B}_2+{I}$, then $N[j,j](s)$ has a pole at $\lambda_i$ for any $j$ such that $\mathcal{B}(v_i^T)[j]\neq0$.
\end{theorem}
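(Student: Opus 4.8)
The plan is to reduce the statement to a pole/zero cancellation analysis on an explicit state-space realisation of the cascade $N(s)[I-Q,P]$, and then to exploit the diagonal structure of $N$ imposed by $N\in\mathcal{E}_p$. Writing $M(s)\triangleq[I-Q,P]=C_1(sI-A_1)^{-1}B_1+D_1$ with the Gilbert data of \eqref{eq:gilbertqp}--\eqref{eq:E}, and $N(s)=C_2(sI-A_2)^{-1}B_2+I$ with the diagonal realisation \eqref{eq:realisationofN}, the product $NM$ (input first through $M$, then through $N$) admits the realisation
\begin{equation*}
A=\begin{bmatrix} A_1 & 0 \\ B_2C_1 & A_2\end{bmatrix},\quad B=\begin{bmatrix} B_1 \\ B_2D_1\end{bmatrix},\quad C=\begin{bmatrix} C_1 & C_2\end{bmatrix},\quad D=D_1.
\end{equation*}
Since $A$ is block lower-triangular, its spectrum is the union of the poles of $[I-Q,P]$ (eigenvalues of $A_1$) and the poles of $N$ (eigenvalues of $A_2$), and under Assumption~\ref{ass:1} the zero $\lambda_i$ is not a pole of $[I-Q,P]$. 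Hence a drop of McMillan degree of the cascade at $\lambda_i$ can only come from an eigenvalue of $A_2$ (a pole of $N$) becoming uncontrollable or unobservable. I would first rule out unobservability: any right eigenvector of $A$ at such a $\lambda_i$ lives entirely in the $x_N$-block and is mapped by $C$ to $C_2\xi_2\neq0$ because $(A_2,C_2)$ is observable. Thus cancelling the zero $\lambda_i$ forces an \emph{uncontrollable} mode of $N$ at $\lambda_i$.

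Next I would run the PBH controllability test on $(A,B)$ at $\lambda_i$. A left eigenvector $w^T=[w_1^T,w_2^T]$ of $A$ satisfies $w_2^TA_2=\lambda_i w_2^T$ and $w_1^T=w_2^TB_2C_1(\lambda_iI-A_1)^{-1}$ (well defined since $\lambda_i$ is not an eigenvalue of $A_1$), and a short computation gives
\begin{equation*}
w^TB=w_2^TB_2\big(C_1(\lambda_iI-A_1)^{-1}B_1+D_1\big)=w_2^TB_2\,M(\lambda_i).
\end{equation*}
So the mode is uncontrollable exactly when $w_2^TB_2\,M(\lambda_i)=0$. Because $(A_2,B_2)$ is controllable we have $w_2^TB_2\neq0$, hence $w_2^TB_2$ is a nonzero left annihilator of $M(\lambda_i)$, i.e. precisely a left zero direction of $[I-Q,P]$ at $\lambda_i$: $w_2^TB_2=\alpha v_i^T$ for some scalar $\alpha\neq0$.

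Finally I would exploit the block-diagonal structure $A_2=\mathrm{diag}(a_1,\ldots,a_p)$, $B_2=\mathrm{diag}(b_1,\ldots,b_p)$ from \eqref{eq:realisationofN}. A left eigenvector $w_2^T$ of a block-diagonal matrix at $\lambda_i$ is supported only on those blocks $a_m$ having $\lambda_i$ in their spectrum, that is on the indices $m$ for which $N[m,m]$ has a pole at $\lambda_i$; consequently $(w_2^TB_2)[m]=0$ whenever $N[m,m]$ has no pole at $\lambda_i$. Combining with $w_2^TB_2=\alpha v_i^T$, for every $j$ with $\mathcal{B}(v_i^T)[j]\neq0$ we obtain $(w_2^TB_2)[j]=\alpha\,v_i^T[j]\neq0$, which is impossible unless $N[j,j]$ has a pole at $\lambda_i$. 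This yields the claim.

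The main obstacle I anticipate is the clean identification of ``cancelling a zero of $[I-Q,P]$'' with an uncontrollable $N$-mode whose input coupling $w_2^TB_2$ aligns with the zero direction $v_i^T$, and keeping this mechanism separate from the dual one (zeros of $N$ cancelling poles of $[I-Q,P]$) that feeds the other term of \eqref{eq:degree}. Care is also needed when the left null space of $M(\lambda_i)$ has dimension greater than one: then $v_i^T$ must be read as the specific zero direction associated with the cancellation under consideration, and one must invoke Assumption~\ref{ass:1} (simple poles, no coincident poles and zeros) to guarantee invertibility of $\lambda_iI-A_1$ so that $w_1^T$, and hence the reduction to $w_2^TB_2\,M(\lambda_i)$, is valid.
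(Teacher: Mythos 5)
Your proposal is correct and follows essentially the same route as the paper's proof: a PBH (left-eigenvector) controllability test on the cascade realisation of $N[I-Q,P]$, elimination of the $A_1$-component to reduce the test to $w_2^TB_2\,[I-Q(\lambda_i),P(\lambda_i)]=0$, identification of $w_2^TB_2$ with the zero direction $v_i^T$, and the block-diagonal structure of $(A_2,B_2)$ to localise the support of $w_2^TB_2$ on the diagonal entries of $N$ having a pole at $\lambda_i$. Your additions — explicitly ruling out unobservability via minimality of $(A_2,B_2,C_2,I)$, keeping the scaling $\alpha$ in $w_2^TB_2=\alpha v_i^T$, and flagging the multiplicity caveat — only tighten steps the paper asserts without proof.
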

\begin{proof}
If a zero of $[I-Q,P]$, say $\lambda_i$, is cancelled by cascading a system ${N}(s)=(s{I}-{R}^*)s^{-1}\triangleq {C}_2({A}_2-sI)^{-1}{B}_2+{I}$, then the realisation of the cascaded system $(s{I}-{R^*})s^{-1}[I-{Q},{P}]$ loses controllability. 

In this case, it follows that there exists a nonzero vector ${z}_i^T= [{z}_{1,i}^T, {z}_{2,i}^T]$ such that
$$\begin{bmatrix} {z}^T_{1,i}& {z}^T_{2,i} \end{bmatrix}\begin{bmatrix} {A}_1-\lambda_i{I} & {0}  & B_1 \\ {B}_2
  {C}_1 & {A}_2-\lambda_i{I} & B_2[I, ~0]
\end{bmatrix}=0.$$
This leads to 
\begin{itemize}
\item[1.] $${z}^T_{2,i}({A}_2-\lambda_i{I})=0,$$
which indicates that ${z}_{2,i}$ is an eigenvector of ${A}_2$ corresponding to $\lambda_i$. 
\item[2.] 
\begin{equation}\label{eq:temp}
\begin{bmatrix} {z}^T_{1,i}& {z}^T_{2,i}B_2 \end{bmatrix}\begin{bmatrix} {A}_1-\lambda_i{I} & B_1 \\ 
  {C}_1 & [I, ~0]
\end{bmatrix}=0.\end{equation}
\end{itemize}
Notice that, 
\begin{align*}
\begin{bmatrix} {A}_1-s{I} & {B}_1\\ {C}_1& [{I},~0]
\end{bmatrix}
&\begin{bmatrix} {I} & -({A}_1-s{I})^{-1}B_1 \\
0& {I}
\end{bmatrix}\\ 
&=\begin{bmatrix} {A}_1-s{I} & 0\\ C_1 &
{[I-Q(s),P(s)]}
\end{bmatrix},
\end{align*}
and, since $\lambda_i$ is not a pole of $[I-Q(s),~P(s)]$, it follows from eq.~\eqref{eq:temp} that
\begin{equation}\label{eq:reqs}
{z}^T_{2,i}B_2{[I-Q(\lambda_i),P(\lambda_i)]}=0
\end{equation}
By definition, $\lambda_i$ is a zero of ${[I-Q(s),P(s)]}$ if
there exists a $v_i^T$ such that 
\begin{equation}\label{eq:zero}
v_i^T{[I-Q(\lambda_i),P(\lambda_i)]}=0,
\end{equation}
By comparing eqs.~(\ref{eq:reqs}) and~(\ref{eq:zero}) we conclude that $v_i^T={z}^T_{2,i}B_2$ is the vector associated with the zero direction of ${[I-Q(\lambda_i),~P(\lambda_i)]}$. 
Then, it also follows that  
\begin{equation}\label{eq:vector}
\mathcal{B}({z}^T_{2,i}B_2)=\mathcal{B}(v_i^T).
\end{equation}
%On the other hand, ${z}^T_{2,i}B_2=[1~2]$ and ${z}^T_{2,m}A=\lambda_m A$ where ${z}^T_{2,i} \in \mathbb{R}^{l}$ and $B_2\in\mathbb{R}^{l\times n}$, we can scale the left eigenvector arbitrarily but its Boolean structure preserves, i.e. 
%$\mathcal{B}(z_{2,i}^TB_2)=\mathcal{B}(v^T_i)$. 
Since $a_m$ in eq.~\eqref{eq:realisationofN} are diagonal matrices for all $m$, then without loss of generality 
\begin{align*}
z^T_{2,i_m}&=\begin{bmatrix}1&0\ldots &0\end{bmatrix},~\forall m\\
z^T_{2,i}&=\begin{bmatrix}z^T_{2,i_1}&z^T_{2,i_2}\ldots &z^T_{2,i_k}\end{bmatrix}
\end{align*}
if $a_m$ has an eigenvalue as $\lambda_i$. 
Since $B_2$ also has a block diagonal structure,  we have 
\begin{align*}
{z}^T_{2,i}B_2&={z}^T_{2,i}\text{diag} [b_1,~\ldots,~b_k]\\
&=\left[b_1[:,1],~b_2[:,1], \ldots,~ b_k[:,1]\right].
\end{align*}
This implies that the $j^{th}$ nonzero elements in $v_i^T$ corresponds to a nonzero element in $b_j[:,1]$ which further implies that $\lambda_i$ is a pole of $N[j,j](s)$, the $j^{th}$ transfer function on the diagonal of $N(s)$. $\hfill \square$
\end{proof}

\begin{theorem}\label{thm:cancelp}
Under Assumption~\ref{ass:1}, if a pole $\lambda_i$ of $[I-{Q},{P}]$ is cancelled by cascading a system ${N}(s)={C}_2({A}_2-sI)^{-1}{B}_2+{I}$, then\begin{equation}\label{eq:reqne}
{N}(\lambda_i){E}_i={0},
\end{equation}
where $E_i$ is defined in eq.~\eqref{eq:E}.
\end{theorem}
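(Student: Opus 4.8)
The plan is to prove Theorem~\ref{thm:cancelp} as the observability dual of Theorem~\ref{thm:cancelz}. In the product $N[I-Q,P]$ the factor $[I-Q,P]$ is the upstream system, so a pole of $[I-Q,P]$ can disappear from the cascade only if the mode generating it is blocked by $N$ before reaching the output; this is a loss of observability, exactly mirroring the loss of controllability that encoded the cancellation of a zero in Theorem~\ref{thm:cancelz}. I would therefore reuse the cascade realisation already employed there, $A_{\mathrm{casc}}=\begin{bmatrix} A_1 & 0 \\ B_2 C_1 & A_2 \end{bmatrix}$ and $C_{\mathrm{casc}}=\begin{bmatrix} C_1 & C_2 \end{bmatrix}$, where $(A_1,B_1,C_1,D_1)$ is the Gilbert realisation of $[I-Q,P]$ from \eqref{eq:gilbertqp}--\eqref{eq:E} and $(A_2,B_2,C_2,I)$ is the diagonal minimal realisation of $N$ in \eqref{eq:realisationofN}, and then apply the PBH observability test of $(A_{\mathrm{casc}},C_{\mathrm{casc}})$ at the eigenvalue $\lambda_i$ of $A_1$.

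Before that test can be used I must know that the cancellation is indeed an observability phenomenon and that $N(\lambda_i)$ is finite. Here I would argue that, since $N$ is diagonal, removing the pole $\lambda_i$ from the affected rows of $N[I-Q,P]$ forces the corresponding diagonal entries of $N$ to carry a zero, not a pole, at $\lambda_i$; hence $\lambda_i\notin\mathrm{eig}(A_2)$ and $N(\lambda_i)$ is well defined. Because $(A_1,B_1)$ is controllable and $\lambda_i\notin\mathrm{eig}(A_2)$, a PBH computation shows the $\lambda_i$-mode of $A_{\mathrm{casc}}$ stays controllable, so its disappearance can only come from a loss of observability.

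The PBH observability test then yields a nonzero $w=\begin{bmatrix} w_1^T & w_2^T \end{bmatrix}^T$ with $(A_{\mathrm{casc}}-\lambda_i I)w=0$ and $C_{\mathrm{casc}}w=0$. The top block row gives $(A_1-\lambda_i I)w_1=0$, so by simplicity of the pole (Assumption~\ref{ass:1}) $w_1=\alpha e_i$ and $C_1 w_1=\alpha E_i$ with $E_i$ as in \eqref{eq:E}; the remaining equations reduce to $\alpha B_2 E_i+(A_2-\lambda_i I)w_2=0$ and $\alpha E_i+C_2 w_2=0$. These say precisely that $\begin{bmatrix} w_2^T & \alpha E_i^T \end{bmatrix}^T$ lies in the kernel of the system matrix $\begin{bmatrix} A_2-\lambda_i I & B_2 \\ C_2 & I \end{bmatrix}$ of $N$ at $\lambda_i$. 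Eliminating $w_2$ through the Schur complement of this matrix, which equals $N(\lambda_i)$, collapses the kernel condition to $\alpha N(\lambda_i)E_i=0$. Finally $\alpha\neq 0$, since $\alpha=0$ would make $w_2$ an unobservable mode of the minimal pair $(A_2,C_2)$; dividing by $\alpha$ gives $N(\lambda_i)E_i=0$, which is \eqref{eq:reqne}.

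I expect the main obstacle to be the second step rather than the algebra: rigorously ruling out the coincidence $\lambda_i\in\mathrm{eig}(A_2)$ and establishing that a cancelled pole of the upstream factor must correspond to an \emph{unobservable} (rather than uncontrollable) mode. Once that is settled, the remaining manipulation is the exact transpose of the controllability argument already carried out for Theorem~\ref{thm:cancelz}.
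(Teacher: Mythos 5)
Your proof follows essentially the same route as the paper's: the same cascade realisation $\left(\begin{smallmatrix} A_1 & 0 \\ B_2C_1 & A_2\end{smallmatrix}\right)$ with output $[C_1~C_2]$, the loss-of-observability/PBH eigenvector condition, identification of $C_1w_{1}$ with $E_i$ via simplicity of the pole and diagonality of $A_1$, and the block-triangular (Schur complement) elimination of $w_2$ against the system matrix $\left[\begin{smallmatrix} A_2-\lambda_i I & B_2 \\ C_2 & I\end{smallmatrix}\right]$ to conclude $N(\lambda_i)E_i=0$. The only difference is that you explicitly justify steps the paper merely asserts --- that the cancellation is an observability rather than controllability loss, that $\lambda_i$ is not a pole of $N(s)$, and that the top component $\alpha$ of the PBH vector is nonzero --- which tightens, but does not change, the argument.
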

\begin{proof}
If a pole of $[I-Q,P]$, say $\lambda_i$, is cancelled by ${N}(s)=(s{I}-{R}^*)s^{-1}\triangleq {C}_2({A}_2-sI)^{-1}{B}_2+{I}$, then the realisation of the
cascade $(s{I}-{R})s^{-1}[I-{Q}~{P}]$ loses
observability. In this case, it follows that there exists a nonzero vector
${w}_i= [{w}_{1,i}^T, {w}_{2,i}^T]^T$ such that
\begin{equation}\label{eq:chobv}
\begin{bmatrix} {A}_1-\lambda_i{I} & {0} \\ {B}_2
  {C}_1 & {A}_2-\lambda_i{I} \\ {C}_1& {C}_2
\end{bmatrix}\begin{bmatrix} {w}_{1,i}\\ {w}_{2,i} \end{bmatrix}={0}.
\end{equation}
The first equation in eq.~\eqref{eq:chobv} shows that ${w}_{1,i}$ is an eigenvector of
${A}_1$ corresponding to $\lambda_i$. Since
${A}_1$ is diagonal, we can directly compute ${w}^T_{1,i}=\begin{bmatrix}0&\ldots&0&1_{i^{th}}&0&\ldots&0\end{bmatrix} \in
\mathbb{R}^{1 \times l}$. Therefore we have
$$\begin{bmatrix} {A}_2-\lambda_i{I} & {B}_2\\ {C}_2& {I}
\end{bmatrix}\begin{bmatrix} {w}_{2,i} \\
  {C}_1{w}_{1,i} \end{bmatrix}={0}.$$
Noticing that ${C}_1{w}_{1,i}= {E}_{i}$ from eq.~\eqref{eq:E}, that
\begin{align*}
&\begin{bmatrix} {I} & {0} \\
-{C}_2({A}_2-s{I})^{-1}& {I}
\end{bmatrix}\begin{bmatrix} {A}_2-s{I} & {B}_2\\ {C}_2& {I}
\end{bmatrix}
=\begin{bmatrix} {A}_2-s{I} & {B}_2\\ {0} &
{N(s)}
\end{bmatrix},
\end{align*}
and that $\lambda_i\neq0$ is not a pole of $N(s)$, we obtain
${N}(\lambda_i){E}_i={0}.$$\hfill \square$
\end{proof}

Based on Theorem~\ref{thm:cancelp}, we have the following Corollary.
\begin{corollary}\label{coro:cancelp}
If a pole $\lambda_i$ of $[I-{Q},{P}]$ is cancelled by cascading a system ${N}(s)={C}_2({A}_2-sI)^{-1}{B}_2+{I}$, then $N[j,j](s)$ has a zero at $\lambda_i$ for any $j$ such that $\mathcal{B}(E_i)[j]\neq0$.
\end{corollary}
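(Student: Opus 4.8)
The plan is to derive this Corollary as an almost immediate consequence of Theorem~\ref{thm:cancelp}. That theorem already establishes the vector identity $N(\lambda_i)E_i=0$ whenever the pole $\lambda_i$ of $[I-Q,P]$ is cancelled by cascading $N(s)$. The only additional ingredient I need is the diagonal structure of $N(s)$, which was built into its construction in eq.~\eqref{eq:realisationofN}.

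First I would write out $N(\lambda_i)E_i=0$ componentwise. Since $N(s)=\text{diag}(N[1,1](s),\ldots,N[p,p](s))$, the matrix $N(\lambda_i)$ is diagonal, so the $j$-th entry of the product $N(\lambda_i)E_i$ is simply $N[j,j](\lambda_i)\,E_i[j]$. Setting the whole vector to zero therefore yields the scalar relation $N[j,j](\lambda_i)\,E_i[j]=0$ for every $j$.

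Next, for any index $j$ with $\mathcal{B}(E_i)[j]\neq 0$, i.e., $E_i[j]\neq 0$, I would divide through by $E_i[j]$ to conclude $N[j,j](\lambda_i)=0$. To interpret this equality as $\lambda_i$ being a \emph{zero} of the scalar transfer function $N[j,j](s)$, I would invoke the fact, already used inside the proof of Theorem~\ref{thm:cancelp}, that $\lambda_i\neq 0$ is not a pole of $N(s)$ and hence not a pole of any diagonal entry $N[j,j](s)$. Thus $N[j,j](\lambda_i)$ is finite, and $N[j,j](\lambda_i)=0$ means precisely that $\lambda_i$ is a zero of $N[j,j](s)$.

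The argument is essentially bookkeeping built on top of the vector identity from Theorem~\ref{thm:cancelp}, so I do not expect a genuine obstacle. The only point requiring care is ensuring that $N[j,j](\lambda_i)=0$ reflects a true zero rather than a vanishing value coinciding with a pole; this is handled by the non-pole property of $\lambda_i$ inherited from the proof of Theorem~\ref{thm:cancelp}.
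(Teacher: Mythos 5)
Your proof is correct and follows exactly the route the paper intends: the paper offers no separate argument for Corollary~\ref{coro:cancelp}, presenting it as an immediate consequence of Theorem~\ref{thm:cancelp}, and your componentwise reading of $N(\lambda_i)E_i=0$ using the diagonal structure of $N(s)$ (plus the finiteness of $N[j,j](\lambda_i)$ since $\lambda_i$ is not a pole of $N$) is precisely the bookkeeping that justifies it.
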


\begin{remark}
In summary, designing ${R}^*$ to cancel any pole
$\lambda_i$ of $[I-{Q},{P}]$ is equivalent to imposing that
eq.~\eqref{eq:reqne} holds. 
\end{remark}

\begin{remark}
The Boolean structure of $E_i$, $\mathcal{B}({E}_i)$ imposes constraints on the diagonal terms in $N(s)$ for cancelling the poles of $[I-Q,P]$.
\end{remark}

\subsection{Algorithm to find $N^*$}\label{se:4.2}
Following the derivations and analysis of the previous section, we shall propose an algorithm to directly answer the question in Problem~\ref{prob:main}: given $[I-{Q}~{P}]$, what is the maximal number of poles that can be cancelled by left multiplication of ${N}(s)$, bearing in mind that $N(s)\in\mathcal{E}_p$? 

Recall from eq.~\eqref{eq:degree} that 
\begin{align*}
&\text{deg}\left\{N[I-Q,P]\right\}=\text{deg}\left\{N\right\}+\text{deg}\left\{[I-Q,P]\right\}\nonumber\\
&-\text{$\#$ of cancelled zeros of $[I-Q,P]$ by cascading}\nonumber\\
 &- \text{$\#$ of cancelled poles of $[I-Q,P]$ by cascading}, 
\end{align*}
Since the $\text{deg}\left\{[I-Q,P]\right\}$ is known and fixed, let $f(N)\triangleq N[I-Q,P]$ then the optimisation problem in above equation becomes
\begin{align*} 
&\min_{N\in\mathcal{E}_p}\text{deg}f(N)=
\min_{N\in\mathcal{E}_p}\{-\text{$\#$ of cancelled zeros of $[I-Q,P]$}\nonumber\\
&- \text{$\#$ of cancelled poles of $[I-Q,P]$}+\text{deg}\left\{N\right\}\}.
\end{align*}
Note that any $R\in\mathcal{D}_p$ can be written as 
\begin{equation}
R=\text{diag}\{\frac{n_1}{d_1},\ldots,\frac{n_p}{d_p}\}, 
\end{equation}
where $n_i$ and $d_i$ are coprime factors for all $i=1,2,\ldots,p$. Then 
\begin{align}
N=I-R/s&=\text{diag}\{\frac{sd_1-n_1}{sd_1},\ldots,\frac{sd_p-n_p}{sd_p}\} \label{eq:Nnew1}\\
&\triangleq\text{diag}\{\frac{\hat{n}_1}{sd_1},\ldots,\frac{\hat{n}_p}{sd_p}\} \label{eq:Nnew2}
\end{align}
where $\text{deg}\hat{n}_i=\text{deg}d_i+1$ for all $i$. Next, we shall propose how to design $d_i$ and $\hat{n}_i$ based on Theorems~\ref{thm:cancelz} and~\ref{thm:cancelp}.

To minimise the above cost function, we are aiming to have maximal number of zeros of $[I-Q,P]$ as poles of some diagonal elements in $N(s)$ from Theorem~\ref{thm:cancelz}, since this a) will not increase the McMillan degree of $f(N)$ and b) gives more degrees of freedom in zeros of $N(s)$ to cancel the poles of $[I-Q,P]$ (and therefore minimise the McMillan degree of the cascaded system), since the number of zeros in $N[i,i](s)$ equals the number of its poles. Moreover, to cancel one pole $\lambda_i$ of $[I-Q,P]$, Corollary~\ref{coro:cancelp} requires that $N[j,j](s)$ has a zero at $\lambda_i$ if $\mathcal{B}(E_i)[j]\neq0$. Assuming the sparsity of $\mathcal{B}(E_i)$ is $k$ ($k\ge1$), this would then lead to an increase of the degree of $N(s)$ by $k$ since $N$ is a diagonal transfer matrix with every element in $\mathcal{E}_1$.

Based on the analysis above and Theorems~\ref{thm:cancelz} and~\ref{thm:cancelp}, to maximise the the number of poles in $[I-Q,P]$ that can be cancelled, one should a) design poles of $N$ to cancel all the zeros of $[I-Q,P]$ and b) use the degrees of freedom in designing zeros of $N$ to cancel as many poles of $[I-Q,P]$.

The next question is then how to solve the optimisation problem in b) once a) is done. Technically, we can use table~\ref{table:example}, which is generated as follows. Column $i$ corresponds to the $i^{th}$ place on the diagonal of the to-be-designed $N(s)$ and the rows are the poles of $[I-Q,P]$,  $p_i^0$, in any order.  The intersection of the $i^{th}$ row and the $j^{th}$ column is a Boolean value corresponding to $\mathcal{B}(E_i)$, the Boolean map of the corresponding direction of the $i^{th}$ pole. It is $1$ if we require the $j^{th}$ element of $N(s)$ to have a zero at $p_i^0$ to cancel the $i^{th}$ pole $p_i^0$ of $[I-Q,P]$. Hence, table~\ref{table:example} shows the requirements to cancel each of the poles as expressed in eq.~\eqref{eq:reqne}. 
 
\begin{table}[htbp]
 \centering
 \begin{tabular}{|c|c|c|c|c|c|}
   \hline
    Poles & Place $1$& Place $2$&\ldots& Place $p-1$&  Place $p$ \\ \hline
   $p^0_{1}$ & $1$ & $0$ & \ldots & $1$ & $0$ \\ \hline
   $p^0_{2}$ & $1$ & $0$ & \ldots& $0$ & 1 \\ \hline
   $p^0_{3}$ & $1$ &  $0$ & \ldots & $0$ &  1 \\ \hline
   $\vdots$ & $\vdots$ & $\vdots$ &$\ddots$&$\vdots$ & $\vdots$\\ \hline
   $p^0_{l-1}$ & 0& 0&\ldots &0 &1 \\ \hline
   $p^0_{l}$ &0   & 0  &\ldots&1 &1  \\ \hline
 \end{tabular}
 \caption{Table for computing the maximum number of pole-zero cancellations.}
  \label{table:example}
\end{table}

%\todo{require that [I-Q,P] has different poles/zeros, right?}

We then maximise the largest number of rows such that, for any column, the summation of the elements on the selected rows is less or equal to a constant obtained from eq.~\eqref{eq:vector}. Choosing a row is equivalent to cancelling the corresponding pole in $[I-Q,P]$. So, the question is how to cancel the largest number of poles without introducing more poles in the cascaded systems of $N(s)$ and $[I-Q,P]$?

Mathematically, let $\psi[j]$ be the maximum number of zeros allowed for the $j$th diagonal element of $N(s)$ and let $T[i,j]\triangleq \mathcal{B}(E_i)[j]\in\{0,1\}$ be the binary element in the
$i$th row and $j$th column of Table~\ref{table:example}. Then, the original problem can be written as the following optimisation
\begin{align}
\max~ &k=\text{card} \{i_1,\ldots,i_k\} \label{eq:k}\\
\text{s.t.},~&\sum_{h=1}^k  T[i_h,j] \le \psi[j],~\forall j, \nonumber \\
& \{i_1,\ldots,i_k\}\subseteq \{1,\ldots,l\},\nonumber
\end{align}
where card is the cardinality of a set.

%\begin{proposition}\label{prop:d}
%Assume $[{Q}~{P}]$ is defined above and $[I-Q,P]$ has zeros $\{p^0_1,~\ldots,~p^0_l\}$, a minimal order realisation of $[{W}~{V}]$ in~\eqref{eq::WVrealization} can be achieved using a diagonal transfer matrix $N^*$.
%\end{proposition}
%\begin{proof}
%The proof contains two parts
%\todo{add}
%\end{proof}
%
%
%

Let $x=[x_1,~x_2,\ldots,~ x_l]$ be binary numbers. We can reformulate eq.~\eqref{eq:k} to the following optimisation problem
\begin{align}
\max ~&x^T \vect{1}\\
\text{s.t.},~&x^T T \le \psi \nonumber,
\end{align}
where the inequality in the constraint is element-wise. This is in a standard form of binary integer programming. When the number of poles is small, the problem is easy to solve, as we can use the exhaustive attack method to go through all the possible cases and find the largest $k$. In general, however, it is an integer optimisation problem and can be viewed as a $n$-dimensional Knapsack problem and therefore NP-hard. We can use, for example, the standard Balas algorithm \cite{balas} to solve it. Once we have determined  $\{i_1,~\ldots,~i_k\}$, we can compute the corresponding zeros and poles of $N^*(s)$ and then solve for  $R^*(s)$.

The above analysis can be summarised with the following algorithm to find a minimal realisation of $[Q,P]$.
\begin{algorithm}[!]
%\begin{algorithmic}
\caption{Minimal $[Q,P]$ realisation}
\begin{step}
Compute the zeros $z^0_i$ of $[I-Q,P]$ and the corresponding directions $v_i^T$. Take the Boolean structure $\mathcal{B}(v_i^T)$, and define the vector  $\psi=\sum\mathcal{B}(v_i^T)+{1}^T$;
\end{step}
\begin{step}
Find a Gilbert realisation of $[I-Q,P]$ and find the conditions in eq.~\eqref{eq:reqne}
for cancelling the poles $p_i^0$;
\end{step}
\begin{step}
Build a table for the cancelling conditions from Step $2$ and compute the maximum 
number of poles that can be cancelled from eq.~\eqref{eq:k};
\end{step}
\begin{step}
Determine $N^*(s)$ based on the table and obtain $[I-X^*,Y^*]=N^*(s)[I-Q,P]$;
\end{step}
\begin{step}
Compute $[W,V]=s[X^*,Y^*]$.
\end{step}
\begin{step}
Find a minimal realisation of $[W,~V]$ and obtain corresponding $A,~B:$ 
$[{W}~{V}]=[{A}_{11}~{B}_1]+{A}_{12}(s{I}-{A}_{22})^{-1}[{A}_{21}~{B}_2].$
\end{step}
\label{alg:iqpzero}
%\end{algorithmic}
\end{algorithm}

\subsection{Special case:  $[I-Q,P]$ does not have zeros}\label{sec:special}
A special case of Algorithm~\ref{alg:iqpzero} is that when $[I-Q,P]$ does not have any zeros and simple poles. In this case, we have the following proposition. 
\begin{proposition}\label{prop:d}\cite{yetac}
Assume $[{I-Q},{P}]$ only has simple poles as in Assumption~\ref{ass:1} and does not have any zeros. A minimal realisation of $[Q,P]$
%\begin{equation}\label{eq::WVrealisation}
%  [{W}~{V}]=[{A}_{11}~{B}_1]+{A}_{12}(s{I}-{A}_{22})^{-1}[{A}_{21}~{B}_2]
%\end{equation}
can be obtained using a constant diagonal matrix ${R}^*$ in eq.~\eqref{eq:Rrealization} and in eq.~\eqref{eq::WVrealization}.\end{proposition}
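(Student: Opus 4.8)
The plan is to run the machinery of Section~\ref{se:4.1} in the degenerate regime where $[I-Q,P]$ has no zeros, and to show that the optimiser it produces is always attainable with a constant $R^*$. By Proposition~\ref{prop:convert} the task reduces to minimising the order of the cascade $N[I-Q,P]$ over $N\in\mathcal{E}_p$, and then reading off $R^*=s(I-N^*)$. Starting from eq.~\eqref{eq:degree}, the zero-free hypothesis kills the zero-cancellation term, leaving
$$\text{deg}\{N[I-Q,P]\}=\text{deg}\{N\}+\text{deg}\{[I-Q,P]\}-\#\{\text{cancelled poles}\},$$
so the only lever available for reducing the order is the cancellation of poles of $[I-Q,P]$ by zeros of $N$.

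First I would quantify the price of cancelling a prescribed set $S$ of poles. By Corollary~\ref{coro:cancelp}, cancelling $\lambda_i$ forces $N[j,j]$ to carry a zero at $\lambda_i$ for each $j$ with $\mathcal{B}(E_i)[j]\neq0$, that is for $k_i$ diagonal positions. The decisive consequence of the no-zero hypothesis is the vacuity of Theorem~\ref{thm:cancelz}: with $[I-Q,P]$ zero-free, no pole of $N$ can ever be absorbed in the cascade, so every pole of $N$ survives. Combining this with the per-entry identity $\#\text{zeros}=\text{deg}$ for elements of $\mathcal{E}_1$ (eq.~\eqref{eq:Nnew2}) and the additivity $\text{deg}\{N\}=\sum_j\text{deg}\{N[j,j]\}$ for diagonal $N$ (eq.~\eqref{eq:realisationofN}), each zero we place is charged its full degree, and the cheapest $N$ cancelling $S$ costs $\text{deg}\{N\}=\sum_{i\in S}k_i$.

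This pins down the structure of the optimiser. Whenever two cancelled poles share a diagonal position, that entry must carry at least two zeros, hence an extra pole which --- again by the absence of zeros --- cannot be cancelled and therefore inflates the order. Consequently no sharing is ever profitable, and the optimal choice is a maximal family $S^*$ of poles with pairwise \emph{disjoint} directional supports. The final and easiest step is to realise this disjoint-support cancellation with a constant diagonal $R^*$: for every chosen $\lambda_i$ set the constant entry $r_j=\lambda_i$ at each $j$ in its support, and pick the remaining $r_j$ to be arbitrary nonzero numbers. Then each diagonal element $N[j,j]=(s-r_j)/s$ carries exactly the single zero that disjoint-support cancellation requires, so $N^*=I-R^*/s$ attains the minimum; feeding $R^*$ through eq.~\eqref{eq:Rrealization}--\eqref{eq::WVrealization} and Step~6--7 of Algorithm~\ref{alg:iqpzero} yields a $[Q,P]$-minimal realisation.

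I expect the construction to be straightforward and the optimality of disjoint supports to be the real obstacle: one must argue rigorously that endowing a diagonal entry with more than one zero --- the only thing a non-constant $R$ buys --- can never lower the order. This is exactly where the assumption ``no zeros'' is indispensable, since with zeros present those extra poles of $N$ could be cancelled and overlapping supports might pay off. A secondary subtlety to handle carefully is the bookkeeping of the common pole of the $N[j,j]$ at the origin, which must be counted with the per-entry multiplicity of eq.~\eqref{eq:realisationofN}; this is the pole removed by the factor $s$ relating eq.~\eqref{eq:Np} to the realisation of $[W,V]$, and getting its multiplicity right is what makes the $-p$ reduction of Proposition~\ref{prop:convert} appear.
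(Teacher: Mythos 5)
Your overall strategy (reduce via Proposition~\ref{prop:convert}, use the degree bookkeeping of eq.~\eqref{eq:degree} together with Theorem~\ref{thm:cancelz} and Corollary~\ref{coro:cancelp}, then argue that disjoint-support cancellation realised by a constant $R^*$ is optimal) is the right one, and your final construction of $R^*$ is exactly what the paper encodes in eq.~\eqref{eq:knonzero}. But there is a genuine gap at the centre of the optimality argument: your cost model is wrong, and your conclusion does not follow from it. You charge \emph{every} zero of $N$ its full degree (``the cheapest $N$ cancelling $S$ costs $\deg N=\sum_{i\in S}k_i$''). Under that accounting the objective becomes $\deg\{N[I-Q,P]\}=\deg\{[I-Q,P]\}+\sum_{i\in S}(k_i-1)$, which is separable across poles, is minimised by $S=\emptyset$ (with ties only from singleton-support poles), and makes every cancellation at best cost-neutral. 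Taken literally, your model ``proves'' that no cancellation is ever needed, i.e.\ that $N=I$, $R=0$ is optimal --- yet $R=0$ yields a realisation of order $p+\deg\{[I-Q,P]\}$, which is not minimal whenever any pole can be cancelled. It also contradicts your own claimed optimiser: including in $S^*$ a pole with $k_i\ge 2$, even with support disjoint from all others, strictly increases your stated objective, and maximality of $S^*$ is irrelevant under a separable objective. Your ``sharing is never profitable'' step silently switches to a different accounting (only poles of $N[j,j]$ \emph{beyond} the one at the origin inflate the order), which is inconsistent with the cost you stated two sentences earlier.

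The missing idea is precisely the point you defer as a ``secondary subtlety'': each diagonal entry $N[j,j]=\hat n_j/(sd_j)$ has one more zero than it has chargeable poles, because its pole at the origin is exactly what the factor $s$ in $[W,V]=s[X,Y]$ removes --- this is the $-p$ in Proposition~\ref{prop:convert} and the $+\vect{1}^T$ in the budget $\psi$ of Algorithm~\ref{alg:iqpzero}. The correct cost of demanding $z_j$ zeros at position $j$ is therefore $(z_j-1)^{+}$, not $z_j$: the first zero in each entry is free. With this accounting the realisation order is $p+\deg\{[I-Q,P]\}-|S|+\sum_j(z_j-1)^{+}$; cancelling a pole whose support is disjoint from the others is then a strict net gain of one, overlapping supports are at best neutral (an exchange argument removing a pole that shares a position never increases the cost), and hence a maximal disjoint-support family --- attainable by your constant $R^*$, since each entry then needs at most its one free zero --- is optimal. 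This is the step your proof needs and does not supply. A related repair is needed in your opening reduction: among the minimisers of eq.~\eqref{eq:Np} one must select an $N^*$ whose diagonal entries all retain the pole at the origin (your constant $R^*$ with $r_j\neq 0$ does; $N=I$ does not), otherwise the $-p$ reduction in Proposition~\ref{prop:convert} is not actually realised by the corresponding $[W,V]$. For reference, the paper gives no proof of this proposition (it cites \cite{yetac}); its surrounding text amounts to the corrected bookkeeping above, namely unit budgets per diagonal entry leading to eq.~\eqref{eq:knonzero}.
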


Basically, when $[I-Q,P]$ does not have zeros and simple poles, $R^*$ is a constant matrix. Hence, there is a much simpler algorithm to obtain the maximum number of cancelling poles, rather than solving the linear integer programming. The problem reduces to the following
\begin{align}
\max~ &k=\text{card} \{i_1,\ldots,i_k\} \label{eq:knonzero}\\
\text{s.t.},~&\sum_{h=1}^k  T_{i_h,j} \le 1,~\forall j, \nonumber \\
& \{i_1,\ldots,i_k\}\subseteq \{1,\ldots,l\}.\nonumber
\end{align}
This problem still takes exponential-time to solve. There exist, however, a number of graph theoretical tools to solve it efficiently. As explained in \cite{GA}, an undirected graph is
denoted by $\mathcal{G}=(\mathcal{V},\mathcal{E})$
where $\mathcal{V}=\left\{\nu_{1},\ldots,\nu_{l}\right\}$ is the set
of nodes and $\mathcal{E}\subset\mathcal{V}\times\mathcal{V}$ is the set of edges. For our purposes, we construct an undirected graph $\mathcal{G}_a$ using the following rules:
\begin{itemize}
\item A node is associated with each vector in the set $\{{p}^0_1,\cdots,{p}^0_l\}$. There are thus $l$ nodes in the considered graph. \item An undirected edge $(i,j)$ is drawn between node $i$ and node $j$ if the equality $\mathcal{B}({E}_i)^T\mathcal{B}({E}_j)=0$ is satisfied.
\end{itemize}
Then, the maximum cardinality of $\{i_1,~i_2,\ldots,i_k\}$ in eq.~\eqref{eq:knonzero} corresponds to the maximum number of nodes in a complete
subgraph $K_n$ of the graph $\mathcal{G}_a$. Although the problem of finding the largest complete subgraphs in an undirected graph is an NP-hard problem, solutions have been proposed in \cite{link}. For an arbitrary graph, the fastest algorithm has a complexity of $\mathcal{O}(2^{n/4})$~\cite{rob}. Hence, we can use these methods to obtain one of the largest complete subgraphs and consequently compute the corresponding set $\{i_1,~i_2,\ldots,i_k\}$ with cardinality $k$.

%In this case, we have have the following algorithm
%\begin{algorithm}[!]
%%\begin{algorithmic}
% \caption{Minimal $[Q,P]$ realisation, when $[I-Q,P]$ does not have zeros  {\color{red} and simple poles?}}
%\begin{steps}
%Compute the zeros $z^0_i$ of $[I-Q,P]$  {\color{red} I thought there were none by assumption??}and the corresponding directions $v_i^T$. Take the Boolean structure $\mathcal{B}(v_i^T)$  {\color{red} and use this where?}, and define a vector that $s={1}^T$;
%\end{steps}
%\begin{steps}
%Find a Gilbert realisation of $[I-Q,P]$ and find the conditions in eq.~\eqref{eq:reqne}
%for cancelling the poles $p_i^0$;
%\end{steps}
%\begin{steps}
%Determine $N^*(s)$ based on the table and obtain $[W,~V]=N^*(s)[I-Q,P]$;
%\end{steps}
%\begin{steps}
%Find a minimal realisation of obtained $[W,~V]$ and obtain corresponding $A,~B$ 
%$[{W}~{V}]=[{A}_{11}~{B}_1]+{A}_{12}(s{I}-{A}_{22})^{-1}[{A}_{21}~{B}_2].$
%\end{steps}
%\label{alg:iqpnzero}
%%\end{algorithmic}
% {\color{red} This algorithm looks the same as the previous. Can you please have a look? It feels you copy/pasted and forgot the adjust it...}
%\end{algorithm}
%

\section{SIMULATION}
In this section, we will illustrate the above algorithms with examples.
\begin{example}
%Assume that we have the following linear system
%\begin{equation*} \begin{array}{rcl}
%\dot{x} &=& 
%\begin{bmatrix}-3 & 0 & -1 & 0 & 0 \\
%			0 & -1 & 0 & -1 & 0 \\
%			0 & -1 & -2 & 0 & -1 \\
%		-1 & 0 & 0 & -1 & -1 \\
%			0 & -1 & 0 & 0 & -1 \end{bmatrix} x + 
%\begin{bmatrix}1 & 0 \\ 0 & 1 \\ 0 & 0 \\ 0 & 0 \\ 0 & 0 \end{bmatrix} u \\
%y &=& \begin{bmatrix} I_3 & 0 \end{bmatrix} x
%\end{array} \end{equation*}
%where $I_3$ is a $3\times3$ identity matrix. Following the above definitions of $[W^o,~V^o]$ we have 
%\begin{align*}
%[W^o ~V^o] &=  
%\begin{bmatrix} -3           &    0  &                 0 &    1 &   0\\
%                \frac{1}{s+1} &  -1+\frac{1}{(s+1)^2} &   0 &   0 &   1\\
%                  0    &       \frac{1}{s+1} &           -2 &   0 &  0 
%                  \end{bmatrix}\\
%R^o &=\text{diag}\{-3, -1+\frac{1}{(s+1)^2}, -2\}
%\end{align*}
Consider $Q, P$ with the following form
\begin{align}
Q &= \begin{pmatrix} 0 & 0 & \frac{-1}{s+3} \\ \frac{s+1}{(s+1)^3+1} & 0 & 0 \\ 0 & \frac{1}{(s+4)(s+2)} & 0 \end{pmatrix},\\
P &= \begin{pmatrix} \frac{1}{s+3} & 0 \\ 0 & \frac{(s+1)^2}{(s+1)^3+1} \\ 0 & 0 \end{pmatrix}.
\end{align}
Here is an illustration of Algorithm \ref{alg:iqpzero}.
\begin{Steps}
Compute the zeros and corresponding zero directions of $[I-Q,P]$. In this case, it only has one zero at $-1$ with a corresponding zero direction of $[0,~1,~0]$. From eq.~\eqref{eq:vector} and definition of $\psi$ we can see that $\psi =[1,~2,~1]$.
\end{Steps}
\begin{Steps}
Obtain a Gilbert realisation of $[I-Q,~ P]$ 
%\yy{if you need to save space, you can write the A matrix as A=diag...}
\begin{align*}
&A_1 = \text{diag}\{-4,-2,-2,-.5+.866i,-.5-.866i,-3\}\\
%\begin{bmatrix}           
%-4  &      0    &              0           & 0 & 0 & 0 \\
%        0         &   -2  &      0           &0 & 0 & 0 \\
%        0              &     0           &      -2 & 0 & 0 & 0\\
%        0            &       0        &           0           &-.5+0.866i & 0 & 0 \\
%        0 & 0 & 0 & 0 & -.5-0.866i & 0 \\
%        0 & 0 & 0 & 0 & 0 & -3
%                    \end{bmatrix}, \\
&B_1 = \begin{bmatrix}
             0          & 1.41 & 0 & 0 & 0 \\
        0          &   2.24 & 0 &  0  & 0 \\
 0.408 & 0 & 0 & 0 & 0.408\\
    0.488-0.423i        &         0       &          0  &      0&      -0.61-0.211i\\
        0.488+0.423i        &         0       &          0  &      0&      -0.61+0.211i\\
0 & 0 & 0 & 1 & 0
 \end{bmatrix},\\
 &C_1 = \begin{bmatrix}
    0 &   0   &        0                 &  0 & 0 & 1\\          
        0    &                0     &               0 .816 & -0.488+0.169i &     -0.488-0.169i       &          0\\
          -0.707     &        0.447        &         0 & 0 &  0  & 0
 \end{bmatrix},\\
  &D_1 = \begin{bmatrix}
  1 &  0 &  0  & 0 & 0 \\
0  & 1 & 0  & 0  & 0 \\
0 & 0 &  1& 0 & 0
\end{bmatrix}.
\end{align*}
Based on the above analysis, we can draw Table~\ref{table:example1}. There, we see that to cancel pole $p^0_1$, we need have 
a zero on the first diagonal element of $N(s)$, similarly for other poles.

\begin{table}[!]
 \centering
 \begin{tabular}{|c|c|c|c|c|c|}
   \hline
    Poles & Place $1$& Place $2$ & Place $3$\\ \hline
   $p^0_{1}= -3$ & $1$ & $0$ & $0$\\ \hline
   $p^0_{2}= -2$ & $0$ & $1$ &  $0$ \\ \hline
   $p^0_{3}=-.5+0.866i$ &$0$   & $1$  &  $0$\\ \hline
   $p^0_{4}=-.5-0.866i $ & $0$& $1$& $0$\\\hline
   $p^0_{5}=-4$ & $0$ &  $0$  & $1$ \\ \hline
   $p^0_{6}=-2$ &$0$  &$0$  & $1$ \\\hline
  \end{tabular}
\caption{Table for computing maximum cancelled poles.}\label{table:example1}
\end{table}
\end{Steps}

\begin{Steps}
Solve the following optimisation problem
\begin{align*}
\max~ &k \\
\text{s.t.},~&\sum_{h=1}^k T_{i_h,j}  \le 1,~\forall j=1,3.  \\
 ~&\sum_{h=1}^k T_{i_h,j}  \le 2,~\forall j=2.  \\
& \{i_1,\ldots,i_k\}\subseteq \{1,\ldots,l\}.
\end{align*}
where , $T_{i,j}\in\{0,1\}$ is the binary element in the $i^{th}$ row and $j^{th}$ column of Table~\ref{table:example1}. By solving the above optimisation, the optimal solution is $k=4$.  Hence, the dimension of $A$ is $p+l-k=3+6-4=5$. 
\end{Steps}

\begin{Steps}
There are several optimal solutions. Choose, for example, the solution $\{i_1,\ldots,i_k\}=\{1,3,4,5\}$. Then 
$$N^*(s)=\text{diag} \left[k_1\frac{s+3}{s}, ~ k_2\frac{s^2+s+1}{s^2+s},~   k_3\frac{s+4}{s} \right]$$ 
where $k_i$ are nonzero parameters.
\end{Steps}
\begin{Steps}
If $k_i=1$, then 
\begin{align*}
[I-W/s,~ V/s]&=N^*[I-Q,P]\\
&=\begin{bmatrix}
\frac{s+3}{s} & 0 & 0 & \frac{1}{s} & 0 \\
\frac{-1}{s(s+2)} & \frac{(s+1)^2}{s(s+2)} & 0 & 0 & \frac{s+1}{s^2+2s}\\
0 & \frac{-1}{s^2+2s} & \frac{s+4}{s} & 0 & 0 
\end{bmatrix},
\end{align*}
which gives
\begin{equation*}
[W,~ V]=\begin{bmatrix}
-3 & 0 & 0 & 1 & 0 \\
\frac{1}{s+2} & \frac{-1}{s+2} & 0 & 0 & \frac{s+1}{s+2}\\
0 & \frac{-1}{s+2} & -4 & 0 & 0 
\end{bmatrix}.
\end{equation*} 
with
$$R^*=sI-sN^*=\text{diag}\left[-3,~ \frac{1}{s+1}, ~ -4\right] 
.$$
\end{Steps}

\begin{Steps}
Find a minimal realisation of $[W,~V]$ and obtain the corresponding $A,~B$ matrices:
\begin{align*}
[{A}_{11}~{B}_1]&= \begin{bmatrix}
-3 & 0 & 0 & 1 & 0 \\
0 & 0 & 0 & 0 & 1\\
0 & 0 & -4 & 0 & 0 
\end{bmatrix},\\
{A}_{12}&= \begin{bmatrix}
0 & 0\\ 1& 0\\ 0 &1
\end{bmatrix}, A_{22}=\begin{bmatrix}
-2 & 0\\ 0& -2
\end{bmatrix},\\
[{A}_{21}~{B}_2]&=\begin{bmatrix}
1 & -1 & 0 & 0 & -1\\
0 & -1 & -1 & 0 & 0 
\end{bmatrix}.
\end{align*}
Hence, a minimal realisation has the following form:
\begin{align*}
A=\begin{bmatrix}
-3 & 0 & 0 & 0 & 0 \\
0 & 0 & 0 & 1 & 0\\
0 & 0 & -4 & 0 & 1\\
 1 & -1 & 0 & -2 & 0\\
0 & -1 & -1 & 0 & -2
\end{bmatrix},~~
B= \begin{bmatrix}
 1 & 0 \\
 0 & 1\\
 0 & 0 \\
 0 & -1\\
 0 & 0 
\end{bmatrix}.
\end{align*}
\end{Steps}

Note that, as mentioned in Step 3, there are several solutions to $N^*$. For example, chosing of the solution $\{i_1,\ldots,i_k\}=\{1,3,4,6\}$ would have lead to a different $N^*$.
However, ultimately all optimal solutions have $A,~B$ matrices of the same dimension.

\end{example}
\section{CONCLUSION}
This paper presented an algorithm for obtaining a minimal order realisation of a given dynamical structure function. This provided a way to estimate the complexity of systems by determining the minimal number of hidden states in networks. This can help understand the minimal number of unknown states interacting in a particular network.

\appendix
\section{Appendix: proof of Proposition~\ref{th:qp}, \ref{lemma:tra}, \ref{th:obserable} and~\ref{prop:control}}\label{sec:appendixA}
\begin{proof}\textbf{[Proposition~\ref{th:qp}]}
Eq.~\eqref{eq:Ds} is directly obtained from the definition of ${R}^o(s)$:
\begin{align*}
  \lim_{s\rightarrow\infty}{R}^o(s)&=\lim_{s\rightarrow\infty}\text{diag}\{{W}^o(s)\}\\
  &=\text{diag}\{\lim_{s\rightarrow\infty}{W}^o(s)\} =
  \text{diag}\{{A}_{11}^o\}
\end{align*}
Since the proofs of eq.~\eqref{eq:Qs} and~\eqref{eq:Ps} are very
similar, we focus on eq.~\eqref{eq:Qs} only. In the following, we use the fact that
for any square matrix ${M}$, if $M^n \rightarrow 0$ when $n\rightarrow +\infty$, then  
$({I}-{M})^{-1}=\sum_{i=0}^{\infty}{M}^i$. From the definition of ${Q}$ in \eqref{eq:Q},
${Q}(s)=\sum_{i=1}^{\infty}
s^{-i}{R}^{o~i-1}(s)\left({W}^o(s)-{R}^o(s)\right)$ and
${W}^o(s)={A}^o_{11}+\sum_{i=1}^{\infty}s^{-i}{A}^o_{12}{A}_{22}^{o~i-1}{A}^o_{21}$, when $s\rightarrow+\infty$.
Hence,
${Q}(s)=({A}^o_{11}-{R}^o(s))s^{-1}+{r}(s)$, in
which ${r}(s)$ is a matrix polynomial of $s$, whose largest degree is
$-2$. Finally, multiplying by $s$ on both sides and taking the limit
as $s$ goes to $\infty$ results in eq.~\eqref{eq:Qs}.  A similar
argument can be used to prove eq.~\eqref{eq:Ps}.$\hfill \square$
\end{proof}

\begin{proof}\textbf{[Proposition~\ref{lemma:tra}]}
Partition $A$ and $B$ according to the following form
$$A=\begin{bmatrix}{A}_{11}&{A}_{12}\\{A}_{21}&{A}_{22}\end{bmatrix},~ B=\begin{bmatrix}   B_1 \\B_2
\end{bmatrix},$$
From this partition, we have that 
$$T^{-1}AT=\begin{bmatrix}{A}_{11}&{A}_{12}T_2\\T_2^{-1}{A}_{21}&T_2^{-1}{A}_{22}T_2\end{bmatrix} ,T^{-1}B=\begin{bmatrix}   B_1 \\T_2^{-1}B_2
\end{bmatrix}.$$
We can then directly compute $[W,~V]$ for the transformed system and verify that such transformation will preserve $[Q,P]$ (the details of the rest of the proof are omitted).$\hfill \square$
\end{proof}

\begin{proof}
\textbf{[Proposition~\ref{th:obserable}]}
From the Popov-Belevitch-Hautus (PBH) rank test~\cite{zdg}, a matrix pair
  $({A}\in \mathbb{R}^{l \times l},{C})$ is observable iff
  \begin{equation}\label{eq:pbh}
    \begin{bmatrix} s{I}-{A}\\
      {C}\end{bmatrix}\text{has full column rank.}
  \end{equation}
  for all $s\in\mathbb{C}$. 
  
   If a realisation is hidden observable, then it implies that the
  pair $({A}_{22},{A}_{12})$ is observable from its definition, i.e.,
  \begin{equation*}  \begin{bmatrix}
      s{I}_{l-p}-{A}_{22}\\
      {A}_{12} \end{bmatrix} \text{has full column rank, $l-p$},~\forall s.
  \end{equation*}
  Hence
  \begin{equation*} \begin{bmatrix}
      s{I}-{A}_{11}&-{A}_{12}\\ -{A}_{21} &
      s{I}_{l-p}-{A}_{22}\\ {I}_p & {0}_{p \times
        (l-p)} \end{bmatrix}\text{has full column rank $l$,}~\forall s,
  \end{equation*}
  which concludes the proof. $\hfill \square$
  \end{proof}

\begin{proof}
\textbf{[Proposition~\ref{prop:control}]}
If $z_0$ is a zero of $[sI-W,~V]$ with direction $v_0^T$, by definition of zeros of a transfer function \cite{zdg},
$$v_0^T[sI-W,V]\big|_{s=z_0}=0.$$
Recall the definition of $[W,V]$
\begin{align*}
\left.v_0^{T}\left(sI-{A}_{11} - {A}_{12}\left ( s{I} -
  {A}_{22} \right )^{-1} {A}_{21}\right)\right\vert_{s=z_0}&=0\\
\left . v_0^T\left({B}_{1} +{A}_{12}\left ( s{I} - {A}_{22} \right
)^{-1} {B}_{2}\right)\right\vert_{s=z_0}&=0.
\end{align*}
Let $v_1^{T}\triangleq v_0^{T}{A}_{12}\left ( s{I} -
  {A}_{22} \right )^{-1}|_{s=z_0}$, then we have 
\begin{align}
v_0^{T}(z_0I-{A}_{11})-v_1^T{A}_{21}&=0\\
v_0^TA_{12}-v_1^T(z_0I-A_{22}) &=0 \label{eq:unob}\\
v_0^T{B}_{1} +v_1^T {B}_{2}&=0,
\end{align}
where eq. ~\eqref{eq:unob} is obtained from the definition.
We can rewrite
\begin{equation}
[v_0^T~v_1^T]\begin{bmatrix}
z_0I-A_{11} & -A_{12} & B_1\\
-A_{21} & z_0I-A_{22} & B_2
\end{bmatrix}=0,
\end{equation}
which means that $[A,~B]$ is not controllable.$\hfill \square$
\end{proof}

\section{Appendix: proof of Proposition~\ref{prop:convert}}\label{se:AppB}
The proof of Proposition~\ref{prop:convert} will be divided into several steps. Start by rewriting eq.~(\ref{eq:Rrealization}) as
\begin{equation}
[sI-W,V]= (s{I}-{R})[I-{Q},{P}].
\end{equation}
For any $N\in\mathcal{E}_p$, and corresponding $R\in\mathcal{D}_p$, and any $[Q,P]$, define
\begin{equation}\label{eq:xy}
[X,Y]\triangleq [I,0]-N[I-Q,P].
\end{equation}
Let $q\le p$ be the normal row rank of $[X,Y]$ and consider the Smith-McMillan of $[X,Y]=U(s)M(s)V(s)$,
where $U(s),V(s)$ are unimodular matrices in $s$, and  
$$M(s)=\begin{bmatrix} 
\frac{\alpha_{1}(s)}{\beta_{1}(s)} & 0 & 0 & \cdots  & 0 \\
0 & \frac{\alpha_{2}(s)}{\beta_{2}(s)} & 0 & \cdots & 0\\
\vdots &  & \ddots & & \vdots\\
0 & \cdots & 0 & \frac{\alpha_{q}(s)}{\beta_{q}(s)} & 0 \\
0 & \cdots & 0 &  \cdots & 0 
\end{bmatrix},$$
where $\alpha_i$ divides $\alpha_{i+1}$ and $\beta_i$ divides $\beta_{i+1}$ for any $i=1,2,\ldots,q-1$. Let $j=1,\ldots,q$ be the smallest integer that $s|\beta_j(s)$, which means that polynomial $s$ exactly divides polynomial $\beta_j(s)$ (otherwise $j=q+1$).

\begin{lemma}\label{lemma1}
If follows that
\begin{align}
&\min_{R\in\mathcal{D}_p} \text{deg}\left\{(s{I}-{R})s^{-1}[s{Q},s{P}]+[{R},{0}]\right\}\nonumber\\
&=\min_{N\in\mathcal{E}_p}  \left \{\text{deg} \left\{N[I-{Q},{P}]\right\}-q+j-1 \right \}.\label{eq:B3}
\end{align}
\end{lemma}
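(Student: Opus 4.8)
The plan is to reduce everything to a single pointwise degree identity and then pass to the minimum. Concretely, I would prove that for each fixed $R\in\mathcal{D}_p$, with the associated $N=I-R/s\in\mathcal{E}_p$ and the integers $q,j$ read off from the Smith--McMillan form of $[X,Y]$ in eq.~\eqref{eq:xy}, one has
$$\text{deg}[W,V] = \text{deg}\{N[I-Q,P]\} - q + j - 1 .$$
Since $R\mapsto N=I-R/s$ is the bijection between $\mathcal{D}_p$ and $\mathcal{E}_p$ noted after eq.~\eqref{eq:ep}, taking $\min$ over $R$ of both sides then gives \eqref{eq:B3} at once. Thus the entire content is this pointwise identity.

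First I would record the algebraic relation that links the two sides. From $[sI-W,V]=(sI-R)[I-Q,P]=sN[I-Q,P]$ and $N=I-R/s$, a direct substitution yields
$$N[I-Q,P]=\tfrac1s[sI-W,V]=[I-W/s,\;V/s],$$
so the matrix in eq.~\eqref{eq:xy} is exactly $[X,Y]=[I,0]-N[I-Q,P]=\tfrac1s[W,-V]$, i.e.\ $[W,-V]=s[X,Y]$. In particular $[X,Y]$ is strictly proper. I would also note that $[W,V]$ is proper: since $Q,P$ are strictly proper, $sQ$ and $sP$ have finite limits (Proposition~\ref{th:qp}), so $W=sQ-RQ+R$ and $V=sP-RP$ are proper. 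Hence no poles at infinity arise and every $\text{deg}$ below counts only finite poles.

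Next I would compare finite-pole structures via the Smith--McMillan form. Because $[X,Y]$ and $N[I-Q,P]$ differ only by the constant $[I,0]$, they share the same finite poles, so $\text{deg}[X,Y]=\text{deg}\{N[I-Q,P]\}=:d=\sum_{i=1}^{q}\deg\beta_i$. Writing $[X,Y]=U(s)M(s)V(s)$ with $U,V$ unimodular, multiplication by the scalar $s$ commutes with $U,V$, so $[W,-V]=U\,(sM)\,V$ and it suffices to reduce each diagonal entry $s\alpha_i/\beta_i$ to coprime form. By the ordering $\beta_i\mid\beta_{i+1}$ and minimality of $j$, the indices with $s\mid\beta_i$ are exactly $i\ge j$; for those, $s$ cancels one power and the denominator degree drops by one (using $\gcd(\alpha_i,\beta_i)=1$), while for $i<j$ one has $s\nmid\beta_i$, hence $\gcd(s\alpha_i,\beta_i)=1$ and the denominator degree is unchanged. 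Summing over the $q-j+1$ indices $i\ge j$ gives $\text{deg}[W,V]=\text{deg}[W,-V]=d-(q-j+1)$, which is the claimed identity.

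The main obstacle is the Smith--McMillan bookkeeping rather than any single hard idea: one must verify carefully that $[W,V]$ is genuinely proper (so that all degrees are finite-pole counts), that the constant shift $[I,0]$ leaves the finite pole structure untouched, and above all that multiplying $M$ by $s$ lowers the denominator degree by \emph{exactly one} precisely for the indices $i\ge j$ and by zero otherwise. This last point rests on coprimeness of $\alpha_i,\beta_i$ together with $s\nmid\beta_i$ for $i<j$, and on the divisibility ordering fixing which $\beta_i$ are divisible by $s$. Once these are in place, the passage to the minimum over $R$ is immediate and completes the proof.
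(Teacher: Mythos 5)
Your proposal is correct and takes essentially the same route as the paper: both reduce the claim to the pointwise identity $\text{deg}[W,V]=\text{deg}\{N[I-Q,P]\}-q+j-1$ for each $R\in\mathcal{D}_p$ (equivalently $N\in\mathcal{E}_p$), proved by writing $[W,V]$ as $s[X,Y]$ and invoking the Smith--McMillan form of $[X,Y]$, where multiplication by $s$ lowers the denominator degree by one precisely for the $q-j+1$ indices $i\ge j$, and then minimising over the bijection $R\leftrightarrow N$. Your additional care---properness of $[W,V]$, the sign discrepancy $[W,-V]$ versus $[W,V]$, and the coprimality bookkeeping showing the cancellation count is exact---only makes explicit details that the paper leaves implicit.
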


\begin{proof}
Rewrite eq.~(\ref{eq:Rrealization}) as
\begin{equation}
[sI-W,V]= (s{I}-{R})[I-{Q},{P}].
\end{equation}
Then, for any $N\in\mathcal{E}_p$, and corresponding $R\in\mathcal{D}_p$, and any $[Q,P]$,
\begin{align*}
[I-X,Y]=N[I-Q,P] & =(I-R/s)[I-Q,P]\\  & =[I-W/s,V/s]. 
\end{align*}
It follows that 
\begin{align*}
\text{deg}\{[W,V]\}=\text{deg}\{[sX,sY]\}=\text{deg}\{s[X,Y]\}.
\end{align*}
Rewrite the expression of $[sX,sY]$ as
\begin{align*}
[sX,sY]&=sU(s)M(s)V(s)
\triangleq U(s)M'(s)V(s),
\end{align*}
where $$M'(s)=sM(s)=\begin{bmatrix} 
\frac{s\alpha_{1}(s)}{\beta_{1}(s)} & 0 & 0 & \cdots  & 0 \\
0 & \frac{s\alpha_{2}(s)}{\beta_{2}(s)} & 0 & \cdots & 0\\
\vdots &  & \ddots & & \vdots\\
0 & \cdots & 0 & \frac{s\alpha_{q}(s)}{\beta_{q}(s)} & 0 \\
0 & \cdots & 0 &  \cdots & 0 
\end{bmatrix}.$$ 
Since 
$$\text{deg}[X,Y]=\sum_{i=1}^q\text{deg}[\beta_i(s)],$$
and $\alpha_i|\alpha_{i+1}$ for all $i$
then
\begin{align}
\text{deg}[sX,sY]&=\sum_{i=1}^q\text{deg}[\beta_i(s)]-q+j-1\nonumber\\
&=\text{deg}[X,Y]-q+j-1\nonumber\\
&=\text{deg}[I-X,Y]-q+j-1\nonumber \\
&=\text{deg}N[I-Q,P]-q+j-1. \label{eq:jj}
\end{align}
%\begin{remark}
%It follows that $j-1\le q\le p$.
%\end{remark}
%It is easy to see that eq.~\eqref{eq:deg} holds if and only if $\alpha_i(s)$ does not 
%process any zeros at $0$, or equivalently $\alpha_i(z)$ and $z$ are coprime. 
%From the property of Smith-McMillam form,  $\alpha_i(z)$ and $z$ are coprime is equivalent to that there are no zeros of $G(z)$ at $0$.
%\subsection{$G$ does not have full normal row rank}
%When $G(z)$ does not have full normal row rank $P$ but $Q<P$,  then the $M(z)$ in the Smith-McMillan form has the following form
%$$M_1(z)=\begin{bmatrix} 
%\frac{\alpha_{1}(z)}{\beta_{1}(z)} & 0 & 0 & \cdots  & 0 \\
%0 & \frac{\alpha_{2}(z)}{\beta_{2}(z)} & 0 & \cdots & 0\\
%\vdots &  & \ddots & & \vdots\\
%0 & \cdots & 0 & \frac{\alpha_{Q}(z)}{\beta_{Q}(z)} & 0 \\
%0 & \cdots & 0 &  \cdots & 0 
%\end{bmatrix}$$
%After left multiply $\Lambda(z)$
%\begin{align*}
%H(z)&=\Lambda(z)G(z)=z^{-L'}U(z)M_1(z)V(z)\\
%&=U(z)z^{-L'}M_1(z)V(z)=U(z)z^{-L'}\begin{bmatrix} I_Q & 0 \\ 0 & 0 \end{bmatrix}M_1(z)V(z)\\& \triangleq U(z)M_1'(z)V(z),
%\end{align*}
%it is then easy to deduce that $z^{-L'}G(z)$ has $$\text{deg} [H(z)] = L'\times Q + \text{deg} [G(z)].$$
%
Based on the above analysis, we can reformulate the optimisation on the left-hand side of eq.~\eqref{eq:B3} into the following form
\begin{align}
\min_{N\in\mathcal{E}_p} \text{deg}[W,V]&=\min_{N\in\mathcal{E}_p} \text{deg}[sX,sY]\nonumber\\
&=\min_{N\in\mathcal{E}_p}\{\text{deg}N[I-Q,P]-q+j-1\}.
\label{eq:N}
\end{align}
$\hfill \square$
\end{proof}

The above optimisation is hard to solve since both $\text{deg}N[I-Q,P]$, $q$ and $j$ depend on the choice of $N$. However, we will show next that an $N^*\in\mathcal{E}_p$ that minimises $\text{deg}N[I-Q,P]$ is also a solution to eq.~\eqref{eq:N}. Such $N^*\in\mathcal{E}_p$ results in $j=1$ and $q=p$. 

The remaining part of the proof will use notation and content from sections~\ref{se:4.1} and ~\ref{se:4.2}. Hence, the reader is expected to have read these sections before continuing.  First, we shall discuss why an optimal $N^*$ guarantees $j=1$ and then that it also guarantees $q=p$. From eqs.~(\ref{eq:Nnew1}) and~(\ref{eq:Nnew2})
\begin{align}
[I-X,Y]&=\text{diag}\left\{\frac{sd_1-n_1}{sd_1},\ldots,\frac{sd_p-n_p}{sd_p}\right\}[I-Q,P]\nonumber\\ &=\frac{1}{s}\text{diag}\left\{\frac{\hat{n}_1}{d_1},\ldots,\frac{\hat{n}_p}{d_p}\right\}[I-Q,P]
\label{eq:guarantee}
\end{align}
%\begin{lemma}\label{lemma:A2}
%Since $[W,V]=sI-\text{diag}\{\frac{\hat{n}_1}{d_1},\ldots,\frac{\hat{n}_p}{d_p}\}[I-Q,P]$. If $[W,V]$ does not process any zeros at $0$, then $j=1$.
%\end{lemma}
%\begin{proof}
%This can be similarly shown using the Smith-McMillan form of $[W,V]$.$\hfill \square$
%\end{proof}

\begin{lemma}
With $N^*$ defined in~(\ref{eq:Np}), $j=1$ in eq.~(\ref{eq:jj}).
\end{lemma}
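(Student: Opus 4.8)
The plan is to characterise $j$ through the rank of the residue of $[I-X,Y]$ at the origin, and then to read that rank off from the explicit pole--zero structure of the optimiser $N^*$. Recall that, with the convention $\beta_i\mid\beta_{i+1}$, the index $j$ equals one plus the number of denominators $\beta_i$ not divisible by $s$, so that exactly $q-j+1$ of the Smith--McMillan channels of $[I-X,Y]$ carry a pole at $s=0$. Since a simple pole contributes to the residue matrix $R_0\triangleq\lim_{s\to0}s\,[I-X,Y]$ with rank equal to the number of channels carrying it (because the unimodular factors of the Smith--McMillan decomposition are invertible at $s=0$), the statement reduces to showing $\text{rank}\,R_0=q$. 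As $q\le p$, I would in fact establish the stronger equality $\text{rank}\,R_0=p$, which simultaneously forces $q=p$ and $j=1$.

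First I would compute $R_0$ from \eqref{eq:guarantee}. Multiplying that identity by $s$ gives $s\,[I-X,Y]=\text{diag}\{\hat n_1/d_1,\dots,\hat n_p/d_p\}\,[I-Q,P]$, so that, once the right-hand side is shown to be analytic at the origin, one obtains $R_0=\text{diag}\{\hat n_m(0)/d_m(0)\}\,[I-Q(0),P(0)]$. The factorisation of $R_0$ into a diagonal matrix times $[I-Q(0),P(0)]$ is what makes the rank computation tractable.

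The hard part will be establishing the non-degeneracy facts that make this limit well defined and of full rank, and this is precisely where the optimality of $N^*$ is essential. Using the construction of Section~\ref{se:4.2}, the poles of $N^*$ (the roots of the $d_m$) sit at zeros of $[I-Q,P]$ by Theorem~\ref{thm:cancelz}, while the zeros of $N^*$ (the roots of the $\hat n_m$) sit at poles of $[I-Q,P]$ by Corollary~\ref{coro:cancelp}. Under Assumption~\ref{ass:1}, and provided $s=0$ is neither a pole nor a zero of $[I-Q,P]$, neither family can sit at the origin, so $d_m(0)\neq0$ and $\hat n_m(0)\neq0$ for every $m$; hence $\text{diag}\{\hat n_m(0)/d_m(0)\}$ is an invertible $p\times p$ matrix, $[I-Q,P]$ is analytic at $0$, and $[I-Q(0),P(0)]$ attains the full normal row rank $p$. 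The delicate point to argue carefully is that the unavoidable factor $1/s$ in every diagonal entry of an element of $\mathcal{E}_p$ (see \eqref{eq:Nnew1}--\eqref{eq:Nnew2}) is genuinely \emph{not} cancelled for the optimal $N^*$: a cancellation would require $\hat n_m(0)=0$, i.e.\ a zero of $N^*$ at the origin, which the optimiser introduces only when $0$ is a pole of $[I-Q,P]$. Ruling out this origin degeneracy (equivalently, handling the integrator case $\lambda_i=0$, or excluding it by normalisation) is the real obstacle.

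Finally I would combine the pieces. Being the product of an invertible diagonal matrix with a matrix of rank $p$, $R_0$ has rank $p$. Since $\text{rank}\,R_0=q-j+1\le q\le p$, this forces $q=p$ and $q-j+1=p$, whence $j=1$, which is the claim (and, as a by-product, the rank equality $q=p$ needed for the following lemma). The only routine verification that remains is that the pole of $[I-X,Y]$ at the origin is simple, which holds because $[I-Q,P]$ is finite at $0$ and each diagonal entry of $N^*$ contributes at most one pole there, so that the residue-rank identity used in the first paragraph applies verbatim.
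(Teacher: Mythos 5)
Your central mechanism is sound and, where it applies, it is essentially a sharpened version of the paper's own argument: for a simple pole at the origin, the rank of $R_0=\lim_{s\to0}s[X^*,Y^*]$ equals the number of Smith--McMillan channels carrying that pole (the unimodular factors being invertible at $s=0$), so $\mathrm{rank}\,R_0=q-j+1$, and establishing $\mathrm{rank}\,R_0=p$ forces $j=1$ and $q=p$ simultaneously. Indeed your route would deliver the subsequent lemma ($q=p$) as a by-product, whereas the paper proves it separately.

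However, there is a genuine gap, and you name it yourself without closing it. Your evaluation of the residue as $\mathrm{diag}\{\hat{n}_m(0)/d_m(0)\}\,[I-Q(0),P(0)]$ is a product of limits, valid only under your added hypothesis that $s=0$ is neither a pole nor a zero of $[I-Q,P]$. Nothing in Assumption~\ref{ass:1} or in the full-normal-row-rank hypothesis of Proposition~\ref{prop:convert} excludes these cases, and they are precisely the cases the paper's proof is built to handle. When $0$ is a zero of $[I-Q,P]$, the optimal $N^*$ has $d_m(0)=0$ for the relevant $m$ (its poles are placed at zeros of $[I-Q,P]$, Theorem~\ref{thm:cancelz}) while $[I-Q(0),P(0)]$ is rank-deficient, so your factorised limit is an indeterminate $0\cdot\infty$ expression; when $0$ is a pole of $[I-Q,P]$, the evaluation $[I-Q(0),P(0)]$ does not even exist. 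The paper instead argues directly on the product: by the design of $N^*$, any zero of $[I-Q,P]$ at the origin is cancelled by a pole of $N^*$, and $N^*$ acquires a zero at the origin only in order to cancel a pole of $[I-Q,P]$ there, so that $\mathrm{diag}\{\hat{n}^*_m/d^*_m\}[I-Q,P]$ has no zero at $s=0$; a direction/evaluation argument then shows that $s[X^*,Y^*]=s[I,0]-\mathrm{diag}\{\hat{n}^*_m/d^*_m\}[I-Q,P]$ has no zero at the origin, which yields $j=1$ by the Smith--McMillan analysis. Deferring this ``origin degeneracy'' as a normalisation issue is therefore not legitimate: it is the substance of the lemma, and closing it requires the cancellation argument on the product (or an equivalent limit analysis), not the factor-by-factor evaluation your proof relies on.
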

\begin{proof}
Let\begin{align}
[X^*,Y^*]&= I-\frac{1}{s}\text{diag}\left\{\frac{\hat{n}^*_1}{d^*_1},\ldots,\frac{\hat{n}^*_p}{d^*_p}\right\}[I-Q,P]\\
&=\frac{1}{s}\left(sI-\text{diag}\left\{\frac{\hat{n}^*_1}{d^*_1},\ldots,\frac{\hat{n}^*_p}{d^*_p}\right\}[I-Q,P]\right)\label{eq:xystar}
\end{align}
From the design process, if $[I-Q,P]$ has a zero at $0$, then it would be cancelled by designing $N^*$. $N^*$ is not designed have a zero at $0$ unless it was used to cancel poles in $0$ of $[I-Q,P]$. Then $N^*[I-Q,P]=\frac{1}{s}\text{diag}\left\{\frac{\hat{n}^*_1}{d^*_1},\ldots,\frac{\hat{n}^*_p}{d^*_p}\right\}[I-Q,P]$ in which $\text{diag}\left\{\frac{\hat{n}^*_1}{d^*_1},\ldots,\frac{\hat{n}^*_p}{d^*_p}\right\}[I-Q,P]$ does not have a zero at $0$.

Next, we show that $sI-\text{diag}\{\frac{\hat{n}^*_1}{d^*_1},\ldots,\frac{\hat{n}^*_p}{d^*_p}\}[I-Q,P]$ does not have a zero at $0$. Otherwise, there would exist a $v^T\in\mathcal{R}^{1\times (p+m)}$ such that 
\begin{align}
&\left. v^T\left(sI-\text{diag}\left\{\frac{\hat{n}^*_1}{d^*_1},\ldots,\frac{\hat{n}^*_p}{d^*_p}\right\}[I-Q,P]\right)\right\vert_{s=0}=0\\
&\left. \Rightarrow v^T\left(\left\{\text{diag}\frac{\hat{n}^*_1}{d^*_1},\ldots,\frac{\hat{n}^*_p}{d^*_p}\right\}[I-Q,P]\right)\right\vert_{s=0}=0,
\end{align}
which would lead to $\text{diag}\left\{\frac{\hat{n}^*_1}{d^*_1},\ldots,\frac{\hat{n}^*_p}{d^*_p}\right\}[I-Q,P]$ having a zero at $0$ and a contradiction. Then $[X^*,Y^*]$ does not have any zero at $0$, and therefore $j=1$ from a similar analysis using Smith-McMillan form.
$\hfill \square$
\end{proof}

\begin{lemma}
With $N^*$ defined in~(\ref{eq:Np}), $q=p$ in eq.~(\ref{eq:jj}).
\end{lemma}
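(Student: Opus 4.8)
The plan is to recast $q=p$ as the statement that $[X^*,Y^*]$, or equivalently $[W,V]=s[X^*,Y^*]$, has full normal row rank $p$, and then to rule out any rational left-annihilator. First I would record the easy half: since $N^*$ is a diagonal transfer matrix whose diagonal entries $\hat n_i^*/(sd_i^*)$ are all nonzero rational functions, $N^*$ is nonsingular over $\mathbb{R}(s)$; because $[I-Q,P]$ has full normal row rank $p$, the product $N^*[I-Q,P]=[I-X^*,Y^*]$ again has full normal row rank $p$. The content of the lemma is therefore that subtracting $[I,0]$, i.e. passing from $[I-X^*,Y^*]$ to $[X^*,Y^*]=[I,0]-N^*[I-Q,P]$, cannot lower the rank.

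Next I would argue by contradiction. Suppose $q<p$, so there is a nonzero $v^T(s)\in\mathbb{R}(s)^{1\times p}$ with $v^T[X^*,Y^*]=0$. Splitting this identity into its first $p$ and last $m$ columns gives $v^T=(v^TN^*)(I-Q)$ and $(v^TN^*)P=0$. Writing $\tilde v^T\triangleq v^TN^*$, these say exactly that $\tilde v^T[I-Q,P]=[v^T,0]$; since $I-Q$ is invertible over $\mathbb{R}(s)$ (its determinant tends to $1$ as $s\to\infty$ because $Q$ is strictly proper, hence is not identically zero), eliminating $\tilde v^T$ yields $v^T(I-Q)^{-1}P=v^TG=0$. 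Thus any rank drop produces a nonzero rational left-annihilator of the transfer function $G=(I-Q)^{-1}P$.

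The cleanest way to close is to note that if $G=(I-Q)^{-1}P$ has full normal row rank $p$ (equivalently, if $P$ has full row rank $p$, since $I-Q$ is invertible), then $v^TG=0$ forces $v=0$, a contradiction, and hence $q=p$. This full-rank condition on $G$ is genuinely stronger than full rank of $[I-Q,P]$ (the latter holds automatically from the unit diagonal of $I-Q$), so when it is not available the argument must instead invoke the construction of $N^*$: a left-annihilator $v^T$ forces $\tilde v^T=v^TN^*$ to annihilate $P$ while reproducing $v^T$ through $I-Q$, which under Assumption~\ref{ass:1} (simple poles, and no pole of $[I-Q,P]$ coinciding with a zero) can only happen if the diagonal poles and zeros that the algorithm of Section~\ref{se:4.2} places on $N^*$ align to create a further cancellation. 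Such a cancellation is not charged in $\deg N^*[I-Q,P]$, so it could be removed to strictly lower $\deg N^*[I-Q,P]$, contradicting that $N^*$ minimises $\deg N[I-Q,P]$ in eq.~\eqref{eq:Np}.

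I expect the third paragraph to be the main obstacle: the reduction to $v^TG=0$ is routine linear algebra over $\mathbb{R}(s)$, but converting the \emph{functional} identity $v^T[W,V]=0$ into the \emph{combinatorial} pole--zero bookkeeping of Table~\ref{table:example} and exhibiting an explicit degree reduction is delicate. If $G$ can be taken to have full row rank (as it does whenever $P$ has full row rank, in particular whenever $m\ge p$ and the inputs act generically), the lemma collapses to the one-line argument of the previous paragraph; otherwise I would first resolve the putative annihilator against the Gilbert decomposition~\eqref{eq:gilbertqp} of $[I-Q,P]$ and track, pole by pole, that no additional cancellation survives the maximality enforced by eq.~\eqref{eq:k}. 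Establishing that last point rigorously, rather than heuristically, is where the real work lies.
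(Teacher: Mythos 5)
Your reduction in the second paragraph is correct as far as it goes: a nonzero left annihilator $v^T$ of $[X^*,Y^*]=[I,0]-N^*[I-Q,P]$ must satisfy $v^TN^*[I-Q,P]=[v^T,0]$, hence $v^TG=0$. But this necessary condition is too weak to carry the proof, and your case split leaves the essential case open. Whenever $m<p$ --- the typical situation for dynamical structure functions, and the one holding in both of the paper's examples, where $p=3$ and $m=2$ --- the $p\times m$ matrix $G$ can never have row rank $p$, so a nonzero $v$ with $v^TG=0$ always exists and your ``one-line'' case never applies. The entire burden then falls on your third paragraph, which you yourself concede is heuristic. Moreover, that heuristic is not obviously repairable in the form you give it: a rank drop of $[X^*,Y^*]$ is a property of $[I,0]-N^*[I-Q,P]$, not of $N^*[I-Q,P]$ (whose normal row rank is always $p$, as you note), so it does not by itself correspond to any pole--zero cancellation that is ``charged'' or ``not charged'' in $\text{deg}\,N^*[I-Q,P]$. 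The constraint you would actually need to contradict is precisely the one your reduction discards, namely $v^TN^*=v^T(I-Q)^{-1}$ with $N^*$ diagonal; minimality of $N^*$ in \eqref{eq:Np} enters only through that constraint, and you never bring the two together.

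The paper closes the argument with a different, pointwise device that avoids $G$ altogether. Writing $N^*=\frac{1}{s}\,\text{diag}\left\{\hat{n}^*_1/d^*_1,\ldots,\hat{n}^*_p/d^*_p\right\}$ as in \eqref{eq:xystar}, one has $s[X^*,Y^*]=[sI,0]-\text{diag}\left\{\hat{n}^*_1/d^*_1,\ldots,\hat{n}^*_p/d^*_p\right\}[I-Q,P]$. Evaluating at the single point $s=0$ kills the term $[sI,0]$, and the preceding lemma (the one establishing $j=1$) showed that, by construction of $N^*$, the matrix $\text{diag}\left\{\hat{n}^*_1/d^*_1,\ldots,\hat{n}^*_p/d^*_p\right\}[I-Q,P]$ has no zero at $s=0$ and therefore has rank $p$ there. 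Since the rank of a rational matrix at a point where it is finite lower-bounds its normal rank, $s[X^*,Y^*]$, and hence $[X^*,Y^*]$, has normal rank $p$, i.e.\ $q=p$. Note that this argument exploits full row rank of the whole $p\times(p+m)$ block $[I-Q,P]$ rather than of $G$, which is why it is insensitive to whether $m\ge p$. If you wish to salvage your annihilator approach, you must rule out any $v$ satisfying $v^TN^*=v^T(I-Q)^{-1}$ together with $v^TG=0$; the natural way to do so is to evaluate that identity at $s=0$, at which point your proof collapses into the paper's.
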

\begin{proof}
%We first show that $[I-Q,P]$ has full normal row rank since $Q$ is a strict proper transfer matrix. Then for any $N=(I-R/s)$, $[I-X,Y]=N[I-Q,P]$ ha full normal row rank. \yy{this can be shown by contradiction.} 
%
%For any $N$, if $[X,Y]=[I,0]-N[I-Q,P]$ does not have full normal rank, then there exists $v^T$ such that
%\begin{align}
%&v^T(I-N(I-Q))=0\Rightarrow v^T=v^TN(I-Q)\\
% &v^TNP=0.
% \end{align}
%Since $I-Q$ has full normal rank, we have $v^TN=v^T(I-Q)^{-1}$
% substitute it to the second equation, 
% \begin{equation}
% v^TG=0.
% \end{equation}
From eq.~\eqref{eq:xystar}, to show that $[X^*,Y^*]$ has full normal rank, it is equivalent to show that 
$s[X^*,Y^*]= sI-\text{diag}\left\{\frac{\hat{n}^*_1}{d^*_1},\ldots,\frac{\hat{n}^*_p}{d^*_p}\right\}[I-Q,P]$ has a full normal rank. 

Since $\text{diag}\left\{\frac{\hat{n}^*_1}{d^*_1},\ldots,\frac{\hat{n}^*_p}{d^*_p}\right\}[I-Q,P]$ does not have a zero at $0$, then it has full rank. At $s=0$, we have that
\begin{equation}
\left. \text{rank}\left(\text{diag}\left\{\frac{\hat{n}^*_1}{d^*_1},\ldots,\frac{\hat{n}^*_p}{d^*_p}\right\}[I-Q,P]\right)\right\vert_{s=0}=p.
\end{equation}
Hence, the normal rank of $s[X^*,Y^*]=p$. Therefore, with $N^*$ defined in~(\ref{eq:Np}), $q=p$ in eq.~(\ref{eq:jj}).$\hfill \square$
\end{proof}

In summary, $N^*$ obtained in Algorithm~\ref{alg:iqpzero} minimises not only eq.~\eqref{eq:Np} but also eq.~\eqref{eq:N}. This completes the proof of Proposition~\ref{prop:convert} since  $N^*$, $j=1$ and $q=p$ minimise eq.~\eqref{eq:N}.
\end{document}